\documentclass[11pt, a4paper, onecolumn, accepted=2021-12-23]{quantumarticle}
\pdfoutput=1
\usepackage[utf8]{inputenc}
\usepackage[english]{babel}
\usepackage[T1]{fontenc}
\usepackage{amsmath}
\usepackage{hyperref}

\usepackage{physics}
\usepackage{braket}
\usepackage{dsfont}
\usepackage{amssymb}
\usepackage{amsthm}
\usepackage{fontawesome}
\usepackage{stmaryrd}

\newcommand{\jewel}{\text{\faDiamond}}

\renewcommand{\epsilon}{\varepsilon}
\renewcommand{\phi}{\varphi}

\newmuskip\pFqmuskip

\newcommand*\pFq[6][8]{
  \begingroup
  \pFqmuskip=#1mu\relax
  \mathchardef\normalcomma=\mathcode`,
  \mathcode`\,=\string"8000
  \begingroup\lccode`\~=`\,
  \lowercase{\endgroup\let~}\pFqcomma
  {}_{#2}F_{#3}{\left[\genfrac..{0pt}{}{#4}{#5};#6\right]}
  \endgroup
}
\newcommand{\pFqcomma}{{\normalcomma}\mskip\pFqmuskip}

\newtheorem{thm}{Theorem}[section]
\newtheorem{lem}[thm]{Lemma}
\newtheorem{cor}[thm]{Corollary}

\newtheorem{defi}[thm]{Definition}
\newtheorem{prop}[thm]{Proposition}
\newtheorem{remark}[thm]{Remark}

\usepackage[numbers,sort&compress]{natbib}

\begin{document}

\title{Maximal violation of steering inequalities and the matrix cube}

\author{Andreas Bluhm}
\email{bluhm@math.ku.dk}
\orcid{0000-0003-4796-7633}
\affiliation{QMATH, Department of Mathematical Sciences, University of Copenhagen, Denmark}

\author{Ion Nechita}
\email{nechita@irsamc.ups-tlse.fr}
\affiliation{Laboratoire de Physique Th\'eorique, Universit\'e de Toulouse, CNRS, UPS, France}
\orcid{0000-0003-3016-7795}
\maketitle

\begin{abstract}
In this work, we characterize the amount of steerability present in quantum theory by connecting the maximal violation of a steering inequality to an inclusion problem of free spectrahedra. In particular, we show that the maximal violation of an arbitrary unbiased dichotomic steering inequality is given by the inclusion constants of the matrix cube, which is a well-studied object in convex optimization theory. This allows us to find new upper bounds on the maximal violation of steering inequalities and to show that previously obtained violations are optimal. In order to do this, we prove lower bounds on the inclusion constants of the complex matrix cube, which might be of independent interest. Finally, we show that the inclusion constants of the matrix cube and the matrix diamond are the same. This allows us to derive new bounds on the amount of incompatibility available in dichotomic quantum measurements in fixed dimension.
\end{abstract}

\tableofcontents

\section{Introduction}

Quantum steering is a phenomenon which was already discovered in the early days of quantum mechanics. Schr{\"o}dinger noticed in 1936 \cite{Schroedinger1935, Schroedinger1936} that in a bipartite setting, one party can steer the state of the other party only using local measurements. Quantum steering was formalized in 2007 \cite{Wiseman2007, Jones2007} and given the operational interpretation that a party Alice wants to convince another party, Bob, that both share an entangled quantum state. The setup is such that Bob does not trust Alice, but he trusts his own devices. Quantum steering has since received a lot of attention due to its applications to, among other things, one-sided device-independent quantum
key distribution \cite{Branciard2012, He2013} and sub-channel discrimination \cite{Piani2015}. 

Already in \cite{Wiseman2007, Jones2007}, it has been shown that the correlations needed for quantum states to exhibit steering are stronger than mere entanglement, but weaker than the correlations required to violate a Bell inequality (see \cite{Uola2020} for a topical review of quantum steering). Therefore, in the same spirit as for Bell inequalities, we can consider \emph{steering inequalities} \cite{Cavalcanti2009} and compare their classical value (more precisely the value obtained using \emph{local hidden states} (LHS) models) to their quantum value. These are the objects of study in \cite{marciniak2015unbounded, Yin2015}. The authors show that increasing the dimension of the quantum systems, the violation of certain steering inequalities can become arbitrarily large. Experimental demonstrations of this fact was proposed in \cite{Skrzypczyk2015, Rutkowski2017}. Our work continues this line of investigation by making the connection to inclusions of free spectrahedra, in particular the \emph{matrix cube}, which is a well studied object in convex optimization theory \cite{ben-tal2002tractable}. Similar connections have been established previously by the authors of this work in the context of the compatibility of quantum measurements \cite{bluhm2018joint, bluhm2020compatibility, Bluhm2020GPT}. This connection allows us to find tight upper bounds on the violation of steering inequalities for ensembles of fixed dimension and generated by a fixed number of dichotomic measurements on Alice's side.

Our results are about the inclusion of the set of quantum assemblages having local hidden state models inside the set of all quantum assemblages, in a fixed setting, where the number of elements of the assemblages is fixed, and also the Hilbert space dimension is fixed. Informally, we show that the following three ``inclusion problems'' are equivalent: 
\begin{itemize}
    \item For an arbitrary steering inequality, how much smaller is its LHS value than its quantum value?
    \item How much does one has to shrink the set of quantum assemblages in order to contain it in the set of LHS assemblages?
    \item How much does one has to shrink the matrix cube in order to contain it in any free spectrahedron with the property that its level one contains the hypercube?
\end{itemize}
One can recognize that the first two statements are dual. The main insight from our work is that the third problem, a well studied question in optimization theory, is equivalent to the first two. Using precise estimates for the inclusion constants of the matrix cube, we can derive, for the first time, bounds for the constants in the first two cases, which are, in some situations, tight. Importantly, the bounds we provide are not steering inequality- or quantum assemblage-dependent; they can be understood as worse-case bounds, independent of the particular instance of the steering inequality or quantum assemblage considered. These quantities are thus \emph{universal constants, measuring the amount of steerability present in quantum mechanics}, in a given steering scenario (assemblage ``shape'' and Hilbert space dimension).

For example, we show in Section \ref{sec:g-indep-LB} that \emph{any unbiased dichotomic steering inequality $\mathbf F$ in dimension $d$} has violation upper bounded by 
$$\frac{V_{\mathcal Q}(\mathbf{F})}{V_{\mathcal L}(\mathbf{F})} \leq \frac{4^{n}}{\binom{2n}{n}},$$
where $n=\lfloor d/2 \rfloor$. The value above is achieved by a sequence of steering inequalities obtained by discretizing the Haar measure on the unitary group $\mathcal U(d)$. This type of result is reminiscent of \cite{Designolle2021}, where the certification of genuine high-dimensional steering was considered, building on earlier work on \emph{dimension witnesses} \cite{brunner2008testing, Cong2017} in the framework of Bell inequalities. In other words, observing a violation of a steering inequality $\mathbf F$ larger than the value above guarantees that the Hilbert space dimension of the quantum states is larger than $d$. 

The bounds we compute also have implications for the incompatibility of dichotomic measurements: they improve the bounds on the compatibility region (the maximal amount of compatibility for fixed dimension and number of measurements) found in \cite{bluhm2018joint}.

After introducing the necessary background from free spectrahedra, measurement compatibility and quantum steering in Section \ref{sec:prelim}, we show the connection between free spectrahedral inclusion and the values of steering inequalities in Section \ref{sec:spec-form}. In Section \ref{sec:incl-csts}, we explore this connection further and show that the sets of inclusion constants for the matrix cube has a correspondence on the steering side, which we call steering constants. The steering constants quantify the maximal violation of steering inequalities. We find another identification of the inclusion constants with the robustness of quantum steering in Section \ref{sec:relative-size}. Section \ref{sec:cube-csts} is devoted to proving bounds on the inclusion constants for the matrix cube. Finally, we conclude with a comparison of our results to previous work in both quantum steering and measurement incompatibility in Section \ref{sec:discussion}.

\section{Preliminaries} \label{sec:prelim}

\subsection{Notation}

Let us write $[n]:= \{1, \ldots, n \}$ for $n \in \mathbb N$. Moreover, for $d \in \mathbb N$, let $\mathcal M_d$ be the set of $d \times d$ matrices with complex entries. For the positive semidefinite matrices, we write $\mathcal M_d^+$, and we denote by ``$\leq$'' the positive semidefinite order on matrices: $X \leq Y \iff Y-X \in \mathcal M_d^+$. The self-adjoint matrices will be denoted by $\mathcal M_d^{\mathrm{sa}}$. The identity matrix in dimension $d$ is written as $I_d$, where we will drop the index if there is no risk of confusion. We denote by $\{e_i\}_{i \in [g]}$, $g \in \mathbb N$, the canonical basis of $\mathbb R^g$ and $\mathbb C^g$, respectively. When considering tuples of self-adjoint matrices $(A_1, \ldots, A_g)$ in dimension $d$, we will often write $\mathbf{A}$ to denote this tuple and conversely refer to entries of $\mathbf{A}$ as $A_i$ without specifying this in advance. Moreover, for $\mathbf{s} \in \mathbb R^g$, we write $\mathbf{s}.\mathbf{A}$ to mean the entrywise product $(s_1 A_1, \ldots, s_g A_g)$.

\subsection{Free spectrahedra}
We will now briefly review \emph{free spectrahedra}. These objects are defined by linear matrix inequalities as follows: Let $\mathbf{A} \in (\mathcal{M}_d^{\mathrm{sa}})^{g+1}$, where $g$, $d \in \mathbb N$. We define
\begin{equation*}
    \hat{\mathcal D}_{\mathbf{A}}(n) := \left\{(X_1, \ldots, X_g)\in (\mathcal{M}_n^{\mathrm{sa}})^g~:~ \sum_{i \in [g]} A_i \otimes X_i \leq A_0 \otimes I_{n} \right\}.
\end{equation*}
Then, the free spectrahedron is defined as $\hat{\mathcal D}_{\mathbf{A}} := \bigsqcup_{n \in \mathbb N} \hat{\mathcal D}_{\mathbf A}(n)$. If $A_0 = I_d$, the free spectrahedron is called \emph{monic} and we will write $\mathcal D_{\mathbf{A}}$ instead to distinguish this case. An example of a monic free spectrahedron is the \emph{matrix cube}  $\mathcal D_{\square, g}$ \cite{ben-tal2002tractable}, for which
\begin{equation*}
    \mathcal D_{\square, g}(n) := \left\{(X_1, \ldots, X_g)\in (\mathcal{M}_n^{\mathrm{sa}})^g~:~\|X_i\|_\infty \leq 1~\forall i \in [g]\right\}.
\end{equation*}
This can be written as $\mathcal D_{\mathbf A}$ with $A_i = e_i \oplus -e_i \in \mathcal M_{2g}^{\mathrm{sa}}$ for all $i \in [g]$, where we interpret the vectors as diagonal matrices. The first level of the matrix cube,  i.e.~the set $\mathcal D_{\square, g}(1)$, is just the unit ball of the $\ell_\infty$-norm. 

Another important monic free spectrahedron is the \emph{matrix diamond} $\mathcal D_{\diamond, g}$ \cite{davidson2016dilations}. It's levels are defined as
\begin{equation*}
    \mathcal D_{\diamond, g}(n) := \{(X_1, \ldots, X_g\} \in (\mathcal{M}_n^{\mathrm{sa}})^g ~:~ \sum_{i \in [g]} \epsilon_i X_i \leq I_n ~ \forall \boldsymbol{\epsilon} \in \{\pm 1\}^g \}.
\end{equation*}
Also the matrix diamond can be brought into the form $\mathcal D_{\mathbf A}$ with diagonal matrices $A_i$. At level $1$, the matrix diamond corresponds to the unit ball of the $\ell_1$-norm. Thus, the convex bodies $\mathcal D_{\diamond, g}(1)$ and $\mathcal D_{\square, g}(1)$ are polar duals of each other. 

Both the matrix cube and the matrix diamond are instances of \emph{maximal matrix convex sets}, defined in \cite{davidson2016dilations}. Let $\mathcal C \subset \mathbb R^g$ be a closed convex set. Then, the corresponding matrix convex set is
\begin{align*}
    &\mathcal W_{\mathrm{max}}(\mathcal C)(n):=\\& \left\{(X_1, \ldots, X_g\} \in (\mathcal{M}_n^{\mathrm{sa}})^g ~:~ \sum_{i \in [g]} h_i X_i \leq c I \textrm{~for~all~}(h,c)\mathrm{~s.t.~} \langle h, x \rangle \leq c~\forall x \in \mathcal C\right\}.
\end{align*}
Thus, the tuples $\mathbf{X} \in  \mathcal W_{\mathrm{max}}(\mathcal C)$ have to satisfy all linear inequalities elements of $\mathcal C$ satisfy and $\mathcal W_{\mathrm{max}}(\mathcal C)(1) = \mathcal C$. It can be seen that for any $\mathcal D_{\mathbf A}$ (and more general any matrix convex set) such that $\mathcal D_{\mathbf A}(1) \subseteq \mathcal C$, it holds that $\mathcal D_{\mathbf A} \subseteq \mathcal W_{\mathrm{max}}(\mathcal C)$, which justifies the name \emph{maximal} matrix convex set. If $\mathcal C$ is a polyhedron containing $0$ in its interior, $ \mathcal W_{\mathrm{max}}(\mathcal C)$ is a free spectrahedron since $\mathcal C$ is determined by finally many inequalities $(h^{(j)}, c^{(j)})$, which can be normalized to $c^{(j)}=1$. We can thus speak of a maximal free spectrahedron. It can be brought in the form $\mathcal D_{\mathbf A}$ with diagonal $A_i$ such that the entries of $A_i$ are $h_i^{(j)}$. 

Let $d^\prime \in \mathbb N$ and let $\mathbf{B} \in (\mathcal{M}_{d^\prime}^{\mathrm{sa}})^{g+1}$. We then write
\begin{equation*}
   \hat{ \mathcal D}_{\mathbf A} \subseteq \hat{\mathcal D}_{\mathbf B} \qquad \iff \qquad   \hat{\mathcal D}_{\mathbf A}(n) \subseteq \hat{\mathcal D}_{\mathbf B}(n) \quad \forall n \in \mathbb N.
\end{equation*}
It is known that in general $\hat{\mathcal D}_{\mathbf A}(1) \subseteq \hat{\mathcal D}_{\mathbf B}(1)$ does not imply $\hat{ \mathcal D}_{\mathbf A} \subseteq \mathbf{s}.\hat{\mathcal D}_{\mathbf B}$ with $\mathbf{s}=(1, \ldots, 1)$, but the implication holds if $\mathbf{s}$ is chosen sufficiently small. This motivates the definition of inclusion constant sets.
\begin{defi}
Consider the inclusion constant set
$$\Delta_{\mathbf A}(g,d):=\{ \mathbf{s} \in [0,1]^g\, : \, \mathcal D_{\mathbf A}(1) \subseteq \mathcal D_{\mathbf B}(1) \implies \mathbf{s}.\mathcal D_{\mathbf A} \subseteq \mathcal D_{\mathbf B} \text{ for all } \mathbf{B} \in (\mathcal M_d^{\mathrm{sa}})^g\}$$
and its non-monic version 
$$\hat \Delta_{\mathbf A}(g,d):=\{ {\mathbf s} \in [0,1]^g\, : \, \mathcal D_{\mathbf A}(1) \subseteq \hat{\mathcal D}_{\mathbf B}(1) \implies \mathbf{s}.\mathcal D_{\mathbf A} \subseteq \hat{\mathcal D}_{\mathbf B} \text{ for all } \mathbf{B} \in (\mathcal M_d^{\mathrm{sa}})^{g+1}\}.$$
Here, for a given set of $g$-tuples $\mathcal D$, we write $\mathbf{s}.{\mathcal D}:= \{(s_1 X_1, \ldots, s_g X_g)~:~(X_1, \ldots, X_g) \in \mathcal D\}$.
\end{defi}
Clearly, if $A_0 = I$, we have $\hat \Delta_{\mathbf A}(g,d) \subseteq \Delta_{\mathbf A}(g,d)$ for all parameter choices. We will now show that these sets are the same under mild conditions on $\mathcal D_{\mathbf A}$.

\begin{prop} \label{prop:hat-equals-hatless}
Let $A_1, \ldots, A_g$ be a $g$-tuple of matrices such with the property that 
\begin{equation*}
    0 \in \operatorname{int}\mathcal D_{\mathbf A}(1)
\end{equation*}
Then, for all $d$, $\hat\Delta_{\mathbf A}(g,d) \subseteq \Delta_{\mathbf A}(g,d)$, where $A_0 = I$ in the non-monic case.
\end{prop}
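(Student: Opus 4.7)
The plan is purely definitional: I will show that the non-monic inclusion appearing in the definition of $\hat{\Delta}_{\mathbf A}(g,d)$ specializes, when its zeroth datum is chosen to be the identity, to the monic inclusion appearing in the definition of $\Delta_{\mathbf A}(g,d)$. Concretely, given any $\mathbf B = (B_1, \ldots, B_g) \in (\mathcal M_d^{\mathrm{sa}})^g$, I would form the padded tuple $\mathbf B' := (I_d, B_1, \ldots, B_g) \in (\mathcal M_d^{\mathrm{sa}})^{g+1}$. The defining inequality $\sum_i B_i \otimes X_i \le I_d \otimes I_n$ of $\mathcal D_{\mathbf B}(n)$ is verbatim the inequality $\sum_i B_i \otimes X_i \le B_0' \otimes I_n$ of $\hat{\mathcal D}_{\mathbf B'}(n)$, so $\mathcal D_{\mathbf B} = \hat{\mathcal D}_{\mathbf B'}$ at every level $n$.

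Given $\mathbf s \in \hat{\Delta}_{\mathbf A}(g,d)$, I would then fix an arbitrary $\mathbf B \in (\mathcal M_d^{\mathrm{sa}})^g$ with $\mathcal D_{\mathbf A}(1) \subseteq \mathcal D_{\mathbf B}(1)$, pass to the padded tuple $\mathbf B'$, and observe that $\mathcal D_{\mathbf A}(1) \subseteq \mathcal D_{\mathbf B}(1) = \hat{\mathcal D}_{\mathbf B'}(1)$. Since $\mathbf B' \in (\mathcal M_d^{\mathrm{sa}})^{g+1}$ is an admissible non-monic witness, applying the hypothesis on $\mathbf s$ to it yields $\mathbf s.\mathcal D_{\mathbf A} \subseteq \hat{\mathcal D}_{\mathbf B'} = \mathcal D_{\mathbf B}$. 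As $\mathbf B$ was arbitrary, this gives $\mathbf s \in \Delta_{\mathbf A}(g,d)$, as required.

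I do not foresee any genuine obstacle here: the argument is entirely formal and in fact reproduces the remark made immediately before the proposition that $\hat\Delta_{\mathbf A}(g,d) \subseteq \Delta_{\mathbf A}(g,d)$ is ``clear'' whenever $A_0 = I$. Notice in particular that the interior-point hypothesis $0 \in \operatorname{int}\mathcal D_{\mathbf A}(1)$ is not consumed anywhere in this direction; it is presumably needed only for the converse inclusion $\Delta_{\mathbf A}(g,d) \subseteq \hat\Delta_{\mathbf A}(g,d)$ that delivers the equality of the two sets promised in the surrounding text, where one would need to convert a general non-monic witness $\mathbf B = (B_0, B_1, \ldots, B_g)$ into an essentially monic one by exploiting the positivity room supplied by $0 \in \operatorname{int}\mathcal D_{\mathbf A}(1)$ to dominate $B_0$ by a multiple of the identity on the relevant subspace.
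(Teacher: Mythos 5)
Your padding argument is formally valid, but it proves only the inclusion that the paper has already dismissed as ``clear'' in the sentence immediately preceding the proposition, and it is not what the paper's own proof establishes. The displayed statement $\hat\Delta_{\mathbf A}(g,d) \subseteq \Delta_{\mathbf A}(g,d)$ is evidently a typo: the surrounding text announces that the two sets are \emph{equal} under the interior-point hypothesis, the proof of Theorem \ref{thm:Sigma-equals-Delta} invokes this proposition precisely for the equality $\hat{\Delta}_\square(g,d) = \Delta_\square(g,d)$, and the proof printed under the proposition argues in the opposite direction --- it takes an arbitrary non-monic witness $\mathbf B \in (\mathcal M_d^{\mathrm{sa}})^{g+1}$ with $\mathcal D_{\mathbf A}(1) \subseteq \hat{\mathcal D}_{\mathbf B}(1)$ and reduces it to a monic one, which is exactly the $\Delta_{\mathbf A}(g,d) \subseteq \hat\Delta_{\mathbf A}(g,d)$ direction. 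You correctly sensed this: you observe that the hypothesis $0 \in \operatorname{int}\mathcal D_{\mathbf A}(1)$ is never consumed and that it must be meant for the converse. But you then leave that converse as a one-sentence sketch, and the sketch is where all the content of the proposition lives.

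Concretely, the paper's reduction runs as follows. Since $0 \in \mathcal D_{\mathbf A}(1) \subseteq \hat{\mathcal D}_{\mathbf B}(1)$, the matrix $B_0$ is positive semidefinite; since $0$ is an \emph{interior} point, there is $\delta>0$ with $\pm\delta e_i \in \mathcal D_{\mathbf A}(1)$, whence $\pm\delta B_i \leq B_0$ and therefore $\operatorname{supp} B_i \subseteq \operatorname{supp} B_0$ for every $i \in [g]$. Restricting all the $B_i$ to $\operatorname{supp} B_0$ leaves the free spectrahedron unchanged and makes $\tilde B_0$ positive definite, so that $C_i := \tilde B_0^{-1/2}\tilde B_i \tilde B_0^{-1/2}$ is a monic tuple with $\hat{\mathcal D}_{\mathbf B} = \mathcal D_{\mathbf C}$; applying the monic inclusion constant to $\mathbf C$ then yields $\mathbf s.\mathcal D_{\mathbf A} \subseteq \mathcal D_{\mathbf C} = \hat{\mathcal D}_{\mathbf B}$. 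Note that this is more delicate than your proposed ``dominating $B_0$ by a multiple of the identity'': one must first dispose of the kernel of $B_0$ (using the interior-point hypothesis) and then conjugate by $\tilde B_0^{-1/2}$, and strictly speaking one should also record that passing to the subspace $\operatorname{supp} B_0$ lowers the dimension, which is harmless because the monic hypothesis in dimension $d$ implies the same in any dimension $d' \leq d$ by padding with zero blocks. Your easy direction can stay as a one-line remark, but to match the proposition's intended content you need to carry out this reduction.
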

\begin{proof}
Consider $ \mathbf{B} \in (\mathcal M_d^{\mathrm{sa}})^{g+1}$ such that $\mathcal D_{\mathbf A}(1) \subseteq \hat{\mathcal D}_{\mathbf{B}}(1)$. Since $0 \in \mathcal D_{\mathbf A}(1)\subseteq \hat{\mathcal D}_{\mathbf B}(1)$, this implies that $B_0$ is positive semi-definite. Since $0 \in \operatorname{int} \mathcal D_{\mathbf A}(1)$, there is a $\delta > 0$ such that $\pm \delta e_i \in \mathcal D_{\mathbf A}(1)$ for all $i \in [g]$. Thus, for any $i \in [g]$, $\pm \delta e_i \in \mathcal D_{\mathbf A}(1)$ implies that $\pm \delta B_i \leq B_0$ hence $\operatorname{supp} B_i \subseteq \operatorname{supp} B_0$. Setting $\tilde B_i := \left. B_i \right|_{\operatorname{supp} B_0}$, it is clear that $\hat{\mathcal D}_{\mathbf B} = \hat{\mathcal D}_{\mathbf{\tilde B}}$. We have now $\tilde B_0$ which is positive \emph{definite}, and we set, for all $i=0,1,\ldots, g$,
$$C_i:= \tilde B_0^{-1/2}\tilde B_i\tilde B_0^{-1/2}.$$
Clearly, $C_0 = I$, and we have 
$$\hat{\mathcal D}_{\mathbf B} = \hat{\mathcal D}_{\mathbf{\tilde B}}=\hat{\mathcal D}_{\mathbf{C}} = \mathcal D_{\mathbf C},$$
finishing the proof.
\end{proof}
As a shorthand, we will write $\Delta_\diamond(g,d)$ for the inclusion constant set of the matrix diamond $\mathcal D_{\diamond, g}$, $\Delta_\square(g,d)$ for the inclusion constant set of the matrix cube $\mathcal D_{\square, g}$, and $\Delta_{\mathcal C}(d)$ for the inclusion constant set of a maximal free spectrahedron $\mathcal W_{\mathrm{max}}(\mathcal C)(d)$.

For more information on free spectrahedra, we refer the reader to \cite{helton_matricial_2013, davidson2016dilations, helton2019dilations}.

\subsection{Measurement compatibility}
The most general measurements in quantum mechanics are abstractly described by \emph{positive operator-valued measures} (POVMs). Let $d$, $k \in \mathbb N$. Then, a POVM with $k$ outcomes on a $d$-dimensional quantum system is a tuple $(E_1, \ldots, E_k) \subset (\mathcal M_d^+)^k$ such that 
\begin{equation*}
    \sum_{i \in [k]} E_i = I_d.
\end{equation*}
If we would like to perform several measurements on the same quantum state we are looking for another measurement from which the measurement outcomes we are interested in can be obtained via classical post-processing. Since the outcomes of quantum measurements are probabilistic, we allow also the classical post-processing to be probabilistic. Thus, for POVMs $(E_{a|x})_{a \in [k_x]}$, where $k_x \in \mathbb N$ for all $x \in [g]$, $g \in \mathbb N$, we are looking for a finite set $\Lambda$ and a POVM $(E_\lambda)_{\lambda \in \Lambda}$ for which there are conditional probabilities $p(a|x, \lambda)$ such that 
\begin{equation*}
    E_{a|x} = \sum_{\lambda \in \Lambda} p(a|x, \lambda) E_\lambda \qquad \forall a \in [k_x], x \in [g].
\end{equation*}
If such an $(E_\lambda)_{\lambda \in \Lambda}$ exists, the POVMs $(E_{a|x})_{a \in [k_x]}$ are \emph{compatible} or \emph{jointly measurable}.

It is known that not all POVMs are compatible. For example, projective measurements are compatible if and only if the projections commute. In general, the situation is more complex. We refer to \cite{Heinosaari2016} for a review of these questions and for equivalent formulations of measurement compatibility. It is however possible to render measurements compatible by adding noise. For the special case of white noise, we replace $E_{a|x}$ by
\begin{equation*}
    E_{a|x}^\prime := s_x E_{a|x} + (1-s_x) \frac{I_d}{k_x},
\end{equation*}
where $s_x \in [0,1]$. This can be interpreted as a device which performs the desired measurement with probability $s_x$ and outputs a uniformly random number from $[k_x]$ with probability $(1-s_x)$. Fixing the dimension $d$, the number of measurements $g$ and their respective outcomes $\mathbf k = (k_1, \ldots, k_x)$, We can thus ask how much white noise we have to add to make any such collection of measurements compatible. That is the interpretation of the \emph{compatibility region}
\begin{align*}
    &\Gamma(g,\mathbf k, d) := \\&\left\{\mathbf{s} \in [0,1]^g~:~ \mathrm{POVMs~}(E_{a|x}^\prime)_{a \in [k_x]}\mathrm{~are~compatible~}\forall \mathrm{~POVMs~}(E_{a|x})_{a \in [k_x]} \subset (\mathcal M_d^+)^{k_x}\right\}.
\end{align*}
In \cite{bluhm2020compatibility}, we have proven that 
\begin{equation} \label{eq:Gamma-is-jewel}
     \Gamma(g,\mathbf k, d) = \Delta_{\jewel}( g, \mathbf k, d),
\end{equation}
where $\Delta_{\jewel}( g, \mathbf k, d)$ is the inclusion constant set of a maximal free spectrahedron which we called the \emph{matrix jewel} \cite[Definition 4.1]{bluhm2020compatibility} and which generalizes the matrix diamond. In this work, we will mostly consider the case $\mathbf k = 2^{\times g}$, for which $\Delta_{\jewel}(g, 2^{\times g}, d) = \Delta_{\diamond}(g, d)$.

\subsection{Quantum steering}

Given positive integers $g$, $d$, and a vector $\mathbf k:=(k_1, \ldots, k_g) \in \mathbb N^g$, a \emph{$(g,\mathbf k,d)$-assemblage} is a set of positive semidefinite matrices $\boldsymbol{\sigma}=(\sigma_{a|x})_{a \in [k_x], x \in [g]} \subset \mathcal M_d^+$ with the property that
$$\forall x \in [g], \qquad \sum_{a=1}^{k_x} \operatorname{Tr} \sigma_{a|x} = 1.$$
In other words, we ask that for all $x \in [g]$, $\sigma_x:=\sum_{a=1}^{k_x} \sigma_{a|x}$ should be a density matrix. Furthermore, we require the average always to be the same, i.e.\ $\sigma_x = \sigma_y$ for all $x$, $y \in [g]$. We denote by $\mathcal Q(g,\mathbf k,d)$ the set of all such $(g,\mathbf k,d)$-assemblages. It is known that for any $\boldsymbol{\sigma} \in \mathcal Q(g,\mathbf k,d)$ there exists a quantum state $\rho \in \mathcal M_{d^2}^+$ and $d$-dimensional \emph{positive operator-valued measures} (POVMs) $\{E_{a|x}\}_{a \in [k_x]}$ for all $x \in [g]$ such that \cite{Schroedinger1936, Hughston1993, Sainz2015}
\begin{equation}\label{eq:assemblage-POVM}
\sigma_{a|x} = [\operatorname{Tr} \otimes \operatorname{id}]\left( (E_{a|x} \otimes I) \rho \right).
\end{equation}
This can be interpreted as a two party protocol \cite{Wiseman2007, Jones2007}: Alice prepares a bipartite state $\rho$ and sends half of it to Bob. Then, she chooses an $x \in [g]$ and measures $(E_{a|x})_{a \in [k_x]}$ on her part. This prepares the assemblage $\boldsymbol{\sigma}$ on Bob's side.  Conversely, it is clear that any choice of bipartite quantum state $\rho$ and POVMs $(E_{a|x})_{a \in [k_x]}$ yields by \eqref{eq:assemblage-POVM} an assemblage having average $\sigma_x = [\operatorname{Tr} \otimes \operatorname{id}](\rho)$ for all $x \in [g]$. Since $\sigma_x$ is thus independent of $x$, we write $\bar \sigma := \sigma_x$ for the average state.

\bigskip

Let $\mathcal L(g, \mathbf k,d)$ the set of $(g,\mathbf k,d)$-assemblages having a \emph{local hidden state (LHS) model} \cite{Wiseman2007}: there exist a finite set $\Lambda$, probability distributions $p$, $q$ and density matrices $\sigma_\lambda$ such that
$$\forall a \in [k_x],~x \in [g], \qquad \sigma_{a|x} = \sum_{\lambda \in \Lambda} q_\lambda p(a|x,\lambda)\sigma_\lambda.$$
The set $\{q_\lambda, \sigma_\lambda\}_{\lambda \in \Lambda}$ is thus an \emph{ensemble} of quantum states. Note that we need only consider finite $\Lambda$ since we consider fixed $g$ and $\mathbf k$ \cite{Skrzypczyk2014, Cavalcanti2016a}. The operational interpretation of LHS models put forward in \cite{Wiseman2007, Jones2007} is that even if Bob does not trust Alice but only his own measurements, he must conclude that the initial state $\rho$ was entangled if the assemblage $\boldsymbol{\sigma}$ does not admit an LHS model. A case in which the assemblage $\boldsymbol \sigma$ admits an LHS is if the $g$ measurements $(E_{a|x})_{a \in [k_x]}$ Alice applies on her side are compatible \cite{Uola2014,quintino2014joint,uola2015one}. If an assemblage does not allow for an LHS model, we will say that the assemblage is \emph{steerable}.

\bigskip

We recall now the notion of \emph{steering inequalities}. Given a tuple of self-adjoint $d$-dimensional matrices $(F_{a|x})_{a \in [k_x], x \in [g]}$, define the LHS, resp.~quantum, maximal value of $\mathbf{F}$ by
\begin{align*}
V_{\mathcal L}(\mathbf{F}) &:= \sup_{\boldsymbol{\sigma} \in \mathcal L(g, \mathbf k,d)} \sum_{a,x} \operatorname{Tr} (\sigma_{a|x} F_{a|x})\\
V_{\mathcal Q}(\mathbf{F}) &:= \sup_{\boldsymbol{\sigma} \in \mathcal Q(g, \mathbf k,d)} \sum_{a,x} \operatorname{Tr} (\sigma_{a|x} F_{a|x}).
\end{align*}
These tuples $\mathbf{F}$ thus define linear $(g,\mathbf k,d)$-steering inequalities \cite{Cavalcanti2009}. A steering inequality can be used to determine if some $(g,\mathbf k, d)$-assemblages do not admit an LHS model, i.e. are steerable, if $V_{\mathcal L}(\mathbf{F}) < V_{\mathcal Q}(\mathbf{F}) $. This has been used e.g.\ in \cite{Saunders2010} to demonstrate quantum steering experimentally. In this work, all steering inequalities will be linear, so we will only refer to them as steering inequalities.

In addition, we consider $n$-dimensional quantum values, where we restrict the system that Alice can measure to be $n$-dimensional. Let $\mathcal Q^n(g,\mathbf k, d)$ be the set of assemblages $(\sigma_{a|x})_{a \in [k_x], x \in [g]}$ such that there exists a set of $n$-dimensional POVMs $(E_{a|x})_{x \in [k_x]}$, $x \in [g]$ and a state $\rho$ on $\mathbb C^n \otimes \mathbb C^d$ for which
\begin{equation*}
\sigma_{a|x} = [\operatorname{Tr} \otimes \operatorname{id}]\left( (E_{a|x} \otimes I) \rho \right).
\end{equation*}
Note that  $\mathcal Q^d(g,\mathbf k, d) =  \mathcal Q(g,\mathbf k, d)$ by \cite{Schroedinger1936, Hughston1993, Sainz2015}. At the other side of the spectrum, the set $\mathcal Q^1(g,\mathbf k, d)$ coincides with the LHS models for which $\Lambda$ is a singleton. We define
\begin{equation*}
    V^n_{\mathcal Q}(\mathbf{F}) := \sup_{\boldsymbol{\sigma} \in \mathcal Q^n(g, \mathbf k,d)} \sum_{a,x} \operatorname{Tr} (\sigma_{a|x} F_{a|x}).
\end{equation*}

We have the following characterization of the values of a given inequality $\mathbf{F}$ in terms of POVMs. Recall that a POVM $(E_a)_{a \in [k]}$ is called \emph{trivial} if the effect operators $E_a$ are scalar: $E_a = p(a)I$; if this is the case, then $p(\cdot)$ is a probability vector. 

\begin{lem}\label{lem:V-lambda_max}
Let $\mathbf{F}$ be a $(g,\mathbf k,d)$-steering inequality, where $d$, $g \in \mathbb N$, $\mathbf k \in \mathbb N^g$. Then \begin{equation*}
     V^n_{\mathcal Q}(\mathbf{F}) = \quad\quad \sup_{\mathbf{E} \text{ POVMs in dim.\ }n} \quad \lambda_{\max} \left[ \sum_{a,x}  
E_{a|x} \otimes F_{a|x} \right],
\end{equation*}
where the POVMs are given by effect elements $E_{a|x} \in \mathcal M_n^+$, with 
$$\forall x \in [g], \qquad \sum_{a=1}^{k_x} E_{a|x} = I_d.$$
In particular,
\begin{alignat*}{2}
V_{\mathcal L}(\mathbf{F}) &= \quad\sup_{\mathbf{E} \text{ trivial POVMs}}\quad &&\lambda_{\max} \left[ \sum_{a,x}  
E_{a|x} \otimes F_{a|x} \right]\\
 V_{\mathcal Q}(\mathbf{F}) &= \quad\quad \sup_{\mathbf{E} \text{ POVMs in dim.\ }d} \quad&& \lambda_{\max} \left[ \sum_{a,x}  
E_{a|x} \otimes F_{a|x} \right].
\end{alignat*}
\end{lem}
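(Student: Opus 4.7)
The plan is to prove the main identity for $V_{\mathcal Q}^n(\mathbf F)$ first, and then derive the two special cases as consequences (since $V_{\mathcal Q} = V_{\mathcal Q}^d$ by definition, and the LHS case will follow by recognizing that trivial POVMs generate exactly the extreme points needed).

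The central computation is the following. By the definition of $\mathcal Q^n(g,\mathbf k, d)$, every assemblage in this set has the form $\sigma_{a|x} = [\operatorname{Tr}\otimes\operatorname{id}]\bigl((E_{a|x}\otimes I_d)\rho\bigr)$ for some state $\rho$ on $\mathbb C^n\otimes \mathbb C^d$ and POVMs $(E_{a|x})_{a\in[k_x]}$ on $\mathbb C^n$, and conversely every such pair yields an assemblage in $\mathcal Q^n$. Using the standard partial-trace identity $\operatorname{Tr}\bigl([\operatorname{Tr}\otimes\operatorname{id}]\bigl((A\otimes I)\rho\bigr)\,B\bigr) = \operatorname{Tr}\bigl((A\otimes B)\rho\bigr)$, the objective rewrites as
\begin{equation*}
\sum_{a,x} \operatorname{Tr}(\sigma_{a|x} F_{a|x}) = \operatorname{Tr}\!\left[\left(\sum_{a,x} E_{a|x}\otimes F_{a|x}\right)\rho\right].
\end{equation*}
The operator in brackets is self-adjoint, and the variational characterization of the largest eigenvalue gives $\sup_{\rho} \operatorname{Tr}(M\rho) = \lambda_{\max}(M)$ for any $M\in\mathcal M_{nd}^{\mathrm{sa}}$, the supremum being taken over density matrices and achieved on a rank-one projector onto a leading eigenvector. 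Substituting this and taking the remaining supremum over POVMs $\mathbf E$ in dimension $n$ yields the claimed formula for $V_{\mathcal Q}^n(\mathbf F)$.

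The formula for $V_{\mathcal Q}(\mathbf F)$ is then immediate from the identification $\mathcal Q(g,\mathbf k,d) = \mathcal Q^d(g,\mathbf k,d)$ recalled right before the lemma. For $V_{\mathcal L}(\mathbf F)$, I would argue as follows. If $\mathbf E$ is a trivial POVM, then $E_{a|x} = p(a|x)I_n$ for some probability vectors $p(\cdot|x)$, so
\begin{equation*}
\sum_{a,x} E_{a|x}\otimes F_{a|x} = I_n \otimes \sum_{a,x} p(a|x) F_{a|x},
\end{equation*}
and hence the right-hand side reduces to $\sup_{p} \lambda_{\max}\bigl(\sum_{a,x} p(a|x) F_{a|x}\bigr)$. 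On the other hand, plugging an LHS model $\sigma_{a|x} = \sum_\lambda q_\lambda p(a|x,\lambda) \sigma_\lambda$ into the definition of $V_{\mathcal L}(\mathbf F)$ gives
\begin{equation*}
\sum_{a,x}\operatorname{Tr}(\sigma_{a|x}F_{a|x}) = \sum_\lambda q_\lambda \operatorname{Tr}\!\left(\sigma_\lambda\sum_{a,x} p(a|x,\lambda) F_{a|x}\right),
\end{equation*}
which is a convex combination over $\lambda$. Maximising over $\sigma_\lambda$ for each $\lambda$ produces $\lambda_{\max}$ of the corresponding operator, and the convex combination is maximised by concentrating the weights $q_\lambda$ on a single $\lambda$, reducing the optimisation to a single probability vector $p$. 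The two expressions for $V_{\mathcal L}(\mathbf F)$ therefore coincide.

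I do not expect a serious obstacle: the partial-trace identity and the $\lambda_{\max}$ dualization are standard, and all that really has to be handled carefully is the back-and-forth between assemblages in $\mathcal Q^n$ and (state, POVM) pairs (to justify exchanging the two suprema), plus the elementary reduction in the LHS case. The only minor subtlety is that the LHS statement allows trivial POVMs of any dimension $n$, which is harmless because a trivial POVM factors out an $I_n$ inside the tensor product and leaves $\lambda_{\max}$ unchanged.
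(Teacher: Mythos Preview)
Your proposal is correct and follows essentially the same route as the paper: the $V_{\mathcal Q}^n$ case is handled by the partial-trace identity and the variational formula $\sup_\rho \operatorname{Tr}(M\rho)=\lambda_{\max}(M)$, and the LHS case by collapsing the convex combination over $\lambda$ to a single conditional probability and optimizing the state to get $\lambda_{\max}$. The only cosmetic differences are the order of the two steps in the LHS argument and your explicit remark about the irrelevance of the ambient dimension $n$ for trivial POVMs.
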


\begin{proof}
Let us start with the second formula, corresponding to the $n$-dimensional quantum value, $n \geq 2$:
\begin{align*}
V^n_{\mathcal Q}(\mathbf{F}) &= \sup_{\boldsymbol{\sigma} \in \mathcal Q^n(g, \mathbf k,d)} \sum_{a,x} \operatorname{Tr} (\sigma_{a|x} F_{a|x}) \\
&= \sup_{\rho, \mathbf{E}} \operatorname{Tr}\left[\sum_{a,x} [\operatorname{Tr} \otimes \operatorname{id}]\left( (E_{a|x} \otimes I) \rho \right) F_{a|x} \right]\\
&= \sup_{\rho, \mathbf{E}} \operatorname{Tr}\left[\sum_{a,x} (E_{a|x} \otimes I) \rho  (I \otimes F_{a|x}) \right]\\
&= \sup_{\rho, \mathbf{E}} \operatorname{Tr}\left[\sum_{a,x} (E_{a|x} \otimes F_{a|x}) \rho \right]\\
&= \sup_{ \mathbf{E}} \lambda_{\max}\left[\sum_{a,x} E_{a|x} \otimes F_{a|x} \right].
\end{align*}
Here, $\mathbf{E}$ is a collection of $n$-dimensional POVMs and we have used \eqref{eq:assemblage-POVM} in the second equality. For the LHS case, we have 
	\begin{align}
		V_{\mathcal L}(\mathbf{F}) &= \sup_{\boldsymbol{\sigma} \in \mathcal L(g, \mathbf k,d)} \sum_{a,x} \operatorname{Tr} (\sigma_{a|x} F_{a|x}) \nonumber \\
		&= \sup_{\mathbf{p},\mathbf{q},\boldsymbol{\sigma}} \sum_{\lambda,a,x} q_\lambda p(a|x, \lambda) \operatorname{Tr}(F_{a|x}\sigma_\lambda) \nonumber\\
		&=\sup_{\mathbf{p},\boldsymbol{\sigma}} \sum_{a,x}  p(a|x) \operatorname{Tr}(F_{a|x}\sigma) \nonumber \\
		&=\sup_{\mathbf{p}} \lambda_{\max} \left[\sum_{a,x}  p(a|x) F_{a|x}\right] \label{eq:looks-like-previous-work}, 
	\end{align}
	where we have used convexity in $q_\lambda$ for the third equality and the fact that the extreme points of the probability simplex are Dirac masses. It is easy to see that this expression for $V_{\mathcal L}(\mathbf{F})$ coincides with $V^{1}_{\mathcal Q}(\mathbf{F})$. The conclusion follows by identifying the conditional probability $p(a|x)$ with a set of trivial POVMs.
\end{proof}

Expressions similar to \eqref{eq:looks-like-previous-work} for steering inequalities involving Pauli matrices on qubit systems appear in \cite{Saunders2010, Evans2014}. Note that the choice of $\mathbf{F}$ fixes the tuple $(g, \mathbf k,d)$. We are interested in the \emph{largest quantum violation} of the steering inequality $\mathbf{F}$: $V_{\mathcal Q}(\mathbf{F})/V_{\mathcal L}(\mathbf{F})$. The quantity appears in \cite{Hsieh2016} under the name of steering fraction. We shall also consider the steering inequality $\mathbf{F}$ giving the largest violation, for a fixed setting described by $g$, $\mathbf k = (k_1, \ldots, k_g)$, and $d$
$$\gamma_{g,\mathbf k, d} = \sup_{\mathbf{F}} \frac{V_{\mathcal Q}(\mathbf{F})}{V_{\mathcal L}(\mathbf{F})}.$$
 If $\mathbf k = 2^{\times g}$, we will drop the index $\mathbf k$ and write $\gamma_{g,d}$ instead. Restricting the $\mathbf{F}$ to unbiased steering inequalities, i.e. such that  $F_{-|x} = - F_{+|x}$ for all $x \in [g]$, we obtain the largest unbiased quantum violation $\gamma_{g,d}^0$. Here, we have written for the outcomes $a \in \{\pm\}$ instead of $a \in [2]$, which becomes a helpful mnemonic in the next section. One of the goals of this work is to compute the quantities $\gamma_{g,d}$ and $\gamma^0_{g,d}$.

\section{Spectrahedral formulation of quantum steering} \label{sec:spec-form}

The aim of this section is to make the connection between quantum steering and the inclusion of (free) spectrahedra.
In this work, we shall only consider the dichotomic case, $\mathbf{k} = 2^{\times g}$, $g \in \mathbb N$. Recall the definition of the matrix cube
$$\mathcal D_{\square,g}:=\{(X_1, \ldots, X_g) \in (\mathcal M_n^{\mathrm{sa}})^g \, : \, \|X_i\|_\infty \leq 1 \, \forall i \in [g]\}.$$

Consider a dichotomic steering inequality $\mathbf{F} = (F_{\pm|x})_{x \in [g]}$ to which we associate the matrices $\{A_{\pm|x}\}_{x \in [g]}$ defined by
\begin{equation}\label{eq:F-A}
F_{+|x} = A_{+|x} + A_{-|x} \qquad \text{ and } \qquad  F_{-|x} = A_{+|x} - A_{-|x} \qquad \forall x \in [g]
\end{equation}
and the non-monic free spectrahedron
\begin{equation}\label{eq:def-hat-D-F}
\hat{\mathcal D}_{\mathbf{\tilde F}}(n):=\left\{(X_1, \ldots, X_g) \in (\mathcal M_n^{\mathrm{sa}})^g \, : \, \sum_{x=1}^g A_{-|x} \otimes X_x \leq (I_d - \sum_{x \in [g]} A_{+|x}) \otimes I_n \right\}.
\end{equation}
Note that the unbiased case corresponds to having $A_{+|x} = 0$ for all $x \in [g]$.

\begin{thm} \label{thm:steering-equals-inclusion}
	For an arbitrary dichotomic steering inequality $\mathbf{F}$, we have	
	\begin{equation*}
	    \mathcal D_{\square,g}(n) \subseteq \hat{\mathcal D}_{\mathbf{\tilde F}}(n) \iff V^n_{\mathcal Q}(\mathbf{F}) \leq 1 \qquad \forall n \in [d].
	\end{equation*}
    In particular,
	\begin{enumerate}
		\item $\mathcal D_{\square,g}(1) \subseteq \hat{\mathcal D}_{\mathbf{\tilde F}}(1) \iff V_{\mathcal L}(\mathbf{F}) \leq 1$.
		\item $\mathcal D_{\square,g} \subseteq \hat{\mathcal D}_{\mathbf{\tilde F}} \iff V_{\mathcal Q}(\mathbf{F}) \leq 1$.
	\end{enumerate}
\end{thm}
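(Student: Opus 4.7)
The plan is to reduce the inclusion condition to the POVM formula from Lemma \ref{lem:V-lambda_max} via an affine bijection between dichotomic POVMs in dimension $n$ and tuples in the matrix cube at level $n$. By that lemma, $V^n_{\mathcal Q}(\mathbf F) \leq 1$ is equivalent to
$$\sum_{a,x} E_{a|x} \otimes F_{a|x} \leq I_{nd}$$
for every collection of dichotomic POVMs $\mathbf E$ acting on $\mathbb C^n$. So the whole game is to rewrite this operator inequality as the defining inequality of $\hat{\mathcal D}_{\mathbf{\tilde F}}(n)$, evaluated on an arbitrary $(X_1,\ldots,X_g)\in\mathcal D_{\square,g}(n)$.

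First I would parametrize dichotomic POVMs by self-adjoint contractions: set $X_x := 2E_{+|x} - I_n$, so that $E_{+|x} = (I_n+X_x)/2$ and $E_{-|x} = (I_n-X_x)/2$. This is an affine bijection between tuples of dichotomic POVMs on $\mathbb C^n$ and $\mathcal D_{\square,g}(n)$, because $0 \leq E_{\pm|x} \leq I_n$ is equivalent to $\|X_x\|_\infty \leq 1$. Substituting, and using the definition \eqref{eq:F-A} to get $F_{+|x}+F_{-|x} = 2A_{+|x}$ and $F_{+|x}-F_{-|x} = 2A_{-|x}$, a direct calculation yields
$$\sum_{a,x} E_{a|x} \otimes F_{a|x} \;=\; I_n \otimes \sum_x A_{+|x} \;+\; \sum_x X_x \otimes A_{-|x}.$$
The condition $\sum_{a,x}E_{a|x}\otimes F_{a|x} \leq I_{nd}$ therefore becomes
$$\sum_x X_x \otimes A_{-|x} \;\leq\; I_n \otimes \Bigl(I_d - \sum_x A_{+|x}\Bigr),$$
which, after swapping the two tensor factors (a unitary conjugation preserving $\leq$), is exactly the linear matrix inequality defining $\hat{\mathcal D}_{\mathbf{\tilde F}}(n)$ from \eqref{eq:def-hat-D-F}. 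Quantifying over all POVMs on one side and over all cube tuples on the other, this proves the main equivalence at every level $n$.

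For the two consequences: item~1 is the case $n=1$, where $V^1_{\mathcal Q}(\mathbf F) = V_{\mathcal L}(\mathbf F)$ by Lemma \ref{lem:V-lambda_max}. For item~2, I would note that the equivalence above actually holds for every $n\in\mathbb N$ (the proof never used $n\leq d$). A standard Schmidt-decomposition/compression argument shows that for any POVM tuple on $\mathbb C^n$ with $n\geq d$, the top eigenvector of $\sum E_{a|x}\otimes F_{a|x}$ has Schmidt rank at most $d$, so one may replace the POVMs by compressions to a $d$-dimensional subspace without changing $\lambda_{\max}$; hence $V^n_{\mathcal Q}(\mathbf F) = V^d_{\mathcal Q}(\mathbf F) = V_{\mathcal Q}(\mathbf F)$ for all $n\geq d$, while monotonicity gives $V^n_{\mathcal Q}(\mathbf F) \leq V_{\mathcal Q}(\mathbf F)$ for $n<d$. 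Consequently $\mathcal D_{\square,g} \subseteq \hat{\mathcal D}_{\mathbf{\tilde F}}$, i.e.\ containment at every level, is equivalent to $V^n_{\mathcal Q}(\mathbf F)\leq 1$ for all $n$, which in turn is equivalent to $V_{\mathcal Q}(\mathbf F)\leq 1$.

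I expect the only subtle step to be bookkeeping: keeping track of which Hilbert space each operator lives on (POVM on $\mathbb C^n$, steering effect on $\mathbb C^d$) so that the tensor swap between $\sum E_{a|x}\otimes F_{a|x}$ and the defining LMI $\sum A_{-|x}\otimes X_x$ of $\hat{\mathcal D}_{\mathbf{\tilde F}}(n)$ is justified, and verifying that the affine reparametrization $E_{+|x}\leftrightarrow X_x$ really is a bijection onto $\mathcal D_{\square,g}(n)$ rather than a mere surjection or inclusion. Everything else is a one-line substitution.
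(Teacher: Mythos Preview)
Your proposal is correct and follows essentially the same approach as the paper: both use Lemma~\ref{lem:V-lambda_max} and the affine bijection $X_x = 2E_{+|x}-I_n$ between dichotomic POVMs and cube tuples to rewrite $\sum_{a,x} E_{a|x}\otimes F_{a|x}$ as $I_n\otimes\sum_x A_{+|x} + \sum_x X_x\otimes A_{-|x}$, which is exactly the defining LMI of $\hat{\mathcal D}_{\mathbf{\tilde F}}(n)$. The only cosmetic differences are that the paper treats the $n=1$ case separately via sign vectors $\boldsymbol\epsilon$ (rather than as a specialization of the general $n$), and for item~2 the paper simply invokes the known fact that inclusion into a $d$-dimensional spectrahedron is determined at level $d$, whereas you sketch the underlying Schmidt-rank compression argument directly.
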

\begin{proof}
	Let us first consider the LHS value and the level $1$ inclusion. 
From Lemma \ref{lem:V-lambda_max}, we have 
	$$V_{\mathcal L}(\mathbf{F})  = \sup_{\mathbf{p}} \lambda_{\max} \left[\sum_{a,x}  p(a|x) F_{a|x}\right].$$
	In the dichotomic case we are considering, for a given $x$, we have 
	$$\sum_{a\in \{\pm\}}  p(a|x) F_{a|x} = (2p(+|x)-1)A_{-|x} + A_{+|x}.$$
	The extremal points of the set of conditional probabilities correspond to 
	$2p(+|x)-1 = \epsilon_x$, and thus, we have 
	$$\sup_{\mathbf{p}} \lambda_{\max} \left[\sum_{a,x}  p(a|x) F_{a|x}\right] = \sup_{\boldsymbol{\epsilon} \in \{\pm 1\}^g} \lambda_{\max} \left[ \sum_{x=1}^g \epsilon_x A_{-|x} + \sum_{x=1}^g A_{+|x}\right].$$
	Hence, 
	$$V_{\mathcal L}(\mathbf{F}) \leq 1 \iff \mathcal D_{\square, g}(1) \subseteq \hat{\mathcal D}_{\mathbf{\tilde F}}(1).$$
	
	We now turn to the $n$-dimensional quantum value of the dichotomic steering inequality and to the free spectrahedral inclusion. Again, starting from Lemma \ref{lem:V-lambda_max}, we have
	\begin{align*}
		V^n_{\mathcal Q}(\mathbf{F}) 
		&=\sup_{\mathbf{E}} \lambda_{\max}\left[ \sum_{x=1}^g E_{+|x} \otimes F_{+|x} + E_{-|x} \otimes F_{-|x}\right]\\
		&=\sup_{0 \leq E_{+|x} \leq I_n} \lambda_{\max}\left[ \sum_{x=1}^g (2E_{+|x}-I_n) \otimes A_{-|x} + I_n \otimes\sum_{x=1}^g A_{+|x} \right]\\
		&=\sup_{-I_n \leq X_x \leq I_n} \lambda_{\max}\left[ \sum_{x=1}^g X_x \otimes A_{-|x} + I_n \otimes\sum_{x=1}^g A_{+|x}\right],
	\end{align*}
	and thus $V^n_{\mathcal Q}(\mathbf{F}) \leq 1 \iff \mathcal D_{\square, g}(n) \subseteq \hat{\mathcal D}_{\mathbf{\tilde F}}(n)$, finishing the proof. 
\end{proof}

\begin{remark}
Note that in the above proof for $V_{\mathcal Q}(\mathbf{F})$, we make use of the fact that $\mathcal D_{\square,g} \subseteq \hat{\mathcal D}_{\mathbf{\tilde F}} \iff \mathcal D_{\square,g}(d) \subseteq \hat{\mathcal D}_{\mathbf{\tilde F}}(d)$. This follows from the fact that $\mathcal D_{\mathbf{A}} \subseteq \mathcal D_{ \mathbf B}$ is equivalent to $\mathcal D_{\mathbf A}(d) \subseteq \mathcal D_{\mathbf B}(d)$ for $d$-dimensional $\mathbf B$ \cite{helton2019dilations, bluhm2018joint}. Another way to see the former is to note that without loss of generality the measurement preparing the assemblage on Alice's side can be chosen to be $d$-dimensional \cite{Schroedinger1936, Hughston1993, Sainz2015}. The above theorem also shows that the intermediate levels of the spectrahedral inclusion correspond to intermediate degrees of steerability.
\end{remark}

Theorem \ref{thm:steering-equals-inclusion} implies in particular that we can determine whether $V_{\mathcal Q}(\mathbf{F}) \leq 1$ by means of a semidefinite program, since the inclusion of free spectrahedra can be checked in this way \cite{helton_matricial_2013, helton2019dilations, davidson2016dilations}. This agrees with the fact that it can be verified using an SDP that a fixed assemblage admits a LHS model \cite{Skrzypczyk2014, Cavalcanti2016a}. The SDPs discussed in these works admit dual formulations in terms of violations of certain steering inequalities. Note that a difference between that line of work and ours is that the former fix the assemblage and look for the maximal violation for a certain family of steering inequalities, whereas we fix the steering inequality and consider the maximal violation for assemblages in fixed dimensions and with a fixed number of dichotomic measurements on Alice's side. Even more important is the distinction at the level of the maximal steering inequality violation and steering robustness, discussed in Sections \ref{sec:incl-csts} and \ref{sec:relative-size}. The constant sets $\Delta_{\square}(g,d)$, as well as the values $\gamma_{g,d}$ and $\gamma^0_{g,d}$ establish fundamental limits of quantum theory, and they cannot be related to semidefinite programs.

\section{Connecting steering and inclusion constants}\label{sec:incl-csts}

In this section, we extend the connection between quantum steering and the spectrahedral inclusion found in the previous section and establish that the set of inclusion constants for the matrix cube has a connection to the ratio of the largest violation of a steering inequality.

To this end, we introduce the steering constants, which can be understood as a measure of robustness for the capacity of steering inequalities to detect steerability: $V_{\mathcal L}(\mathbf{F}) < V_{\mathcal Q}(\mathbf{F})$. First, notice that there is a class of steering inequalities which we call \emph{trivial}:
$$\forall x\in [g], \, \forall a,b\in[k_x], \qquad F_{a|x} = F_{b|x}=:F_x.$$
For such trivial inequalities, we clearly have 
$$V_{\mathcal L}(\mathbf{F}) = V_{\mathcal Q}(\mathbf{F}) = \lambda_{\max}\left[ \sum_{x=1}^g F_x \right].$$
To an arbitrary steering inequality $\mathbf{F}$, we associate its trivial realization $\mathbf{F^{(0)}}$
$$F^{(0)}_{a|x} := \frac{1}{k_x} \sum_{a=1}^{k_x} F_{a|x}.$$
We consider convex mixtures between steering inequalities and their trivial realizations: for $\mathbf{s} \in [0,1]^g$ we set
$$\mathbf{F^{(\mathbf{s})}} := \mathbf{s}.\mathbf{F} + (\mathbf{1-s}).\mathbf{F^{(0)}},$$
where $\mathbf{1-s}:= (1-s_1, \ldots, 1-s_g)$.
Note that in the dichotomic case, using the notations $A_{\pm|x} = (F_{+|x} \pm F_{-|x})/2$ from \eqref{eq:F-A}, we have 
$$A^{(\mathbf{s})}_{+|x} = A_{+|x} \quad \text{ and } \quad A^{(\mathbf{s})}_{-|x} = s_x A_{-|x} \qquad \forall x \in [g].$$
Moreover, in the dichotomic \emph{unbiased} case, i.e. $F_{-|x} = - F_{+|x}$ for all $x \in [g]$, the trivial realizations are null and $A_{+|x} = F_{+|x}$, hence the convex mixtures above correspond to scaling: $\mathbf{F^{(\mathbf{s})}} = \mathbf{s}.\mathbf{F}$. This fact will allow us to compare our work to previous results in \cite{marciniak2015unbounded}. For the rest of the section, we focus again on dichotomic steering inequalities.

\begin{defi}
	The set of \emph{steering constants} is defined as:
\begin{equation*}
    \Sigma(g,d) := \{\mathbf{s} \in [0,1]^g \, : \, \forall (F_{\pm|1}, \ldots, F_{\pm|g}) \in (\mathcal M_d^{\mathrm{sa}})^{2g}, \, V_{\mathcal L}(\mathbf{F}) \leq 1 \implies  V_{\mathcal Q}(\mathbf{F^{(\mathbf{s})}}) \leq 1\}
\end{equation*}
Let $\Sigma_0$ be set of \emph{unbiased steering constants}, i.e.\ the above implication only has to hold for $\mathbf{F}$ such that $F_{+|x} = - F_{-|x}$.
\end{defi}

We prove now the main theorem of this section.

\begin{thm}\label{thm:Sigma-equals-Delta}
	For all $g$,  $d \in \mathbb N$, $\Sigma_0(g,d) = \Sigma(g,d) = \Delta_\square(g,d) =  \hat \Delta_{\square}(g,d)$.
\end{thm}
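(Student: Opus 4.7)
The plan is to establish the three equalities $\Sigma_0(g,d) = \Delta_\square(g,d)$, $\Sigma(g,d) = \hat{\Delta}_\square(g,d)$, and $\Delta_\square(g,d) = \hat{\Delta}_\square(g,d)$ by translating the definitions of the steering constant sets into spectrahedral inclusion conditions via Theorem \ref{thm:steering-equals-inclusion}, then invoking Proposition \ref{prop:hat-equals-hatless} to collapse the monic and non-monic inclusion constants for the matrix cube.

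The key algebraic input is a scaling identity relating the convex mixture $\mathbf{F}^{(\mathbf{s})}$ to a rescaled spectrahedron. For a dichotomic $\mathbf{F}$ with associated matrices $A_{\pm|x}$ as in \eqref{eq:F-A}, one has $A_{+|x}^{(\mathbf{s})} = A_{+|x}$ and $A_{-|x}^{(\mathbf{s})} = s_x A_{-|x}$, so $\hat{\mathcal{D}}_{\mathbf{\tilde F}^{(\mathbf{s})}}$ is cut out by $\sum_x s_x A_{-|x} \otimes X_x \leq (I_d - \sum_x A_{+|x}) \otimes I_n$. Absorbing the scalar $s_x$ into the second tensor factor (substituting $Y_x := s_x X_x$) shows
\begin{equation*}
\mathcal{D}_{\square, g} \subseteq \hat{\mathcal{D}}_{\mathbf{\tilde F}^{(\mathbf{s})}} \quad \Longleftrightarrow \quad \mathbf{s}.\mathcal{D}_{\square, g} \subseteq \hat{\mathcal{D}}_{\mathbf{\tilde F}}.
\end{equation*}
Combined with Theorem \ref{thm:steering-equals-inclusion}, this yields the dictionary $V_{\mathcal L}(\mathbf{F}) \leq 1 \iff \mathcal{D}_{\square,g}(1) \subseteq \hat{\mathcal{D}}_{\mathbf{\tilde F}}(1)$ and $V_{\mathcal Q}(\mathbf{F}^{(\mathbf{s})}) \leq 1 \iff \mathbf{s}.\mathcal{D}_{\square,g} \subseteq \hat{\mathcal{D}}_{\mathbf{\tilde F}}$.

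With this dictionary in hand, the equality $\Sigma(g,d) = \hat{\Delta}_\square(g,d)$ is a quantifier match. The inclusion $\hat{\Delta}_\square(g,d) \subseteq \Sigma(g,d)$ is immediate: every dichotomic $\mathbf{F}$ yields an admissible test $\mathbf{B} = \mathbf{\tilde F}$ in the defining implication of $\hat{\Delta}_\square$. For the reverse inclusion, I verify that every $\mathbf{B} \in (\mathcal{M}_d^{\mathrm{sa}})^{g+1}$ arises as $\mathbf{\tilde F}$ for some $\mathbf{F}$: setting $A_{-|x} := B_x$ and splitting $I_d - B_0$ arbitrarily into $g$ self-adjoint summands, say $A_{+|x} := (I_d - B_0)/g$, recovers $\mathbf{\tilde F} = \mathbf{B}$ exactly, so $\mathbf{s} \in \Sigma(g,d)$ forces $\mathbf{s}.\mathcal{D}_{\square,g} \subseteq \hat{\mathcal{D}}_{\mathbf{B}}$ whenever $\mathcal{D}_{\square,g}(1) \subseteq \hat{\mathcal{D}}_{\mathbf{B}}(1)$. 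Specialising to the unbiased case, $A_{+|x} \equiv 0$ forces $B_0 = I_d$ and so $\hat{\mathcal{D}}_{\mathbf{\tilde F}}$ is monic; conversely, every monic $\mathcal{D}_{\mathbf{B}}$ arises from the unbiased inequality $F_{+|x} := B_x = -F_{-|x}$. The analogous argument then gives $\Sigma_0(g,d) = \Delta_\square(g,d)$.

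Finally, $\Delta_\square(g,d) = \hat{\Delta}_\square(g,d)$ is a direct application of Proposition \ref{prop:hat-equals-hatless}, since $0$ lies in the interior of $\mathcal{D}_{\square,g}(1) = [-1,1]^g$. Chaining the three equalities closes the full statement $\Sigma_0 = \Sigma = \Delta_\square = \hat{\Delta}_\square$. No serious obstacle arises: the proof is essentially a pair of direct translations via Theorem \ref{thm:steering-equals-inclusion} together with the quantifier manipulation above. The only step requiring modest care is the rescaling identity for $\hat{\mathcal{D}}_{\mathbf{\tilde F}^{(\mathbf{s})}}$, which asserts that mixing with the trivial steering inequality exactly corresponds to shrinking the matrix cube on the spectrahedral side.
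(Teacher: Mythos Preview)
Your proof is correct and follows essentially the same approach as the paper: the rescaling identity $\mathcal{D}_{\square,g} \subseteq \hat{\mathcal{D}}_{\mathbf{\tilde F}^{(\mathbf{s})}} \iff \mathbf{s}.\mathcal{D}_{\square,g} \subseteq \hat{\mathcal{D}}_{\mathbf{\tilde F}}$, the translation via Theorem~\ref{thm:steering-equals-inclusion}, and the appeal to Proposition~\ref{prop:hat-equals-hatless} are exactly the ingredients the paper uses. You are slightly more explicit than the paper in verifying that the correspondence $\mathbf{F} \leftrightarrow \hat{\mathcal{D}}_{\mathbf{\tilde F}}$ is surjective onto all non-monic (resp.\ monic) spectrahedra, which the paper leaves implicit in its ``$\forall \mathbf{F}$'' diagram.
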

\begin{proof}
Note that since $0 \in \operatorname{int}\mathcal D_{\square,g}(1)$, it follows from Proposition \ref{prop:hat-equals-hatless} that $\hat{\Delta}_\square(g,d) = \Delta_\square(g,d)$. Thus, we need only prove $\Sigma(g,d) = \hat{\Delta}_\square(g,d)$ and $\Sigma_0(g,d) = \Delta_\square(g,d)$. We start by showing the former. 
First, we claim that $\mathbf{s}.\mathcal D_{\square,g} \subseteq \hat{\mathcal D}_{\mathbf{F}}$ is equivalent to $\mathcal D_{\square,g} \subseteq \hat{\mathcal D}_{\mathbf{F^{(\mathbf{s})}}}$: for some $ \mathbf{X} \in \mathcal D_{\square, g}(n)$, $n \in \mathbb N$,
\begin{align*}
\mathbf{s}.\mathbf{X} \in \hat{\mathcal D}_{\mathbf F} &\iff \sum_{x=1}^g A_{-|x} \otimes s_xX_x \leq (I - \sum_{x=1}^g A_{+|x})\otimes I_n\\
&\iff \sum_{x=1}^g A^{(\mathbf{s})}_{-|x} \otimes X_x \leq (I - \sum_{x=1}^g A^{(\mathbf{s})}_{+|x})\otimes I_n\\
&\iff \mathbf{X} \in \hat{\mathcal D}_{\mathbf{F^{(\mathbf{s})}}}.
\end{align*}

Using Theorem \ref{thm:steering-equals-inclusion} for the vertical equivalences, the conclusion follows easily: for $\mathbf{s} \in [0,1]^g$,
\begin{center}
\begin{tabular}{rcccl}
$\mathbf{s} \in \Sigma(g,d) \iff \big[\quad \forall \mathbf{F}\quad$    &  $ V_{\mathcal L}(\mathbf{F}) \leq 1$ & $ \implies$ & $V_{\mathcal Q}(\mathbf{F^{(\mathbf{s})}}) \leq 1$&$ \big]$\\
&$\Updownarrow$&&$\Updownarrow$&\\
$\mathbf{s} \in  \hat{\Delta}_\square(g,d) \iff \big[\quad \forall \mathbf{F}\quad $    &  $\mathcal D_{\square,g}(1) \subseteq \hat{\mathcal D}_{\mathbf{F}}(1)$ & $ \implies$ & $\mathcal D_{\square,g} \subseteq \hat{\mathcal D}_{\mathbf{F^{(\mathbf{s})}}}$&$ \big]$\\
\end{tabular}
\end{center}
The second assertion follows using almost the same argument.
\end{proof}
Since we have shown that the set of steering constants and the set of inclusion constants for the matrix cube are equal, this has given us a powerful tool to compute bounds on the former, since the matrix cube is a well-studied object in convex optimization. We will discuss the implications for quantum steering in Section \ref{sec:discussion}. For the remainder of this section, let us focus on the unbiased case.
We are interested in the quotient $V_{\mathcal Q}(\mathbf{F})/V_{\mathcal L}(\mathbf{F})$. The following lemma shows that this quotient is always finite if we set $0/0 = 1$.

\begin{lem}\label{lem:LHS-zero}
Let $\mathbf{F}$ be a dichotomic and unbiased steering inequality. Then, $V_{\mathcal L}(\mathbf{F}) \geq 0$ with equality if and only if $\mathbf{F} \equiv 0$. In particular, $V_{\mathcal L}(\mathbf{F}) = 0$ if and only if $V_{\mathcal Q}(\mathbf{F}) = 0$.
\end{lem}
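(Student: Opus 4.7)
The plan is to apply Lemma \ref{lem:V-lambda_max} to rewrite $V_{\mathcal L}(\mathbf{F})$ in the dichotomic unbiased setting, and then exploit the sign symmetry that unbiasedness ($F_{-|x} = -F_{+|x}$) produces. For trivial POVMs $p(\pm|x)$, the combination $\sum_a p(a|x) F_{a|x}$ reduces to $(2p(+|x)-1) F_{+|x}$, and since $\lambda_{\max}$ is convex, the supremum over probabilities is attained at the extremes $\epsilon_x = 2p(+|x)-1 \in \{\pm 1\}$. So I can write
$$V_{\mathcal L}(\mathbf{F}) = \max_{\boldsymbol\epsilon\in\{\pm 1\}^g}\lambda_{\max}\bigl(M(\boldsymbol\epsilon)\bigr),\qquad M(\boldsymbol\epsilon):=\sum_{x=1}^g \epsilon_x F_{+|x}.$$

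For nonnegativity, I would pair each $\boldsymbol\epsilon$ with its negation. Since $M(-\boldsymbol\epsilon) = -M(\boldsymbol\epsilon)$, at least one of $\lambda_{\max}(M(\boldsymbol\epsilon))$ and $\lambda_{\max}(-M(\boldsymbol\epsilon)) = -\lambda_{\min}(M(\boldsymbol\epsilon))$ must be nonnegative for any self-adjoint $M$ (otherwise one would have $\lambda_{\max}(M)<0$ and $\lambda_{\min}(M)>0$, a contradiction). Taking the maximum over $\boldsymbol\epsilon$ therefore gives $V_{\mathcal L}(\mathbf{F}) \geq 0$.

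The heart of the argument is the equality case. If $V_{\mathcal L}(\mathbf{F}) = 0$, then for every $\boldsymbol\epsilon\in\{\pm 1\}^g$ we have $\lambda_{\max}(M(\boldsymbol\epsilon)) \leq 0$, and applying this to both $\boldsymbol\epsilon$ and $-\boldsymbol\epsilon$ forces all eigenvalues of $M(\boldsymbol\epsilon)$ to be zero; since $M(\boldsymbol\epsilon)$ is self-adjoint, $M(\boldsymbol\epsilon)=0$ for every sign pattern. A short linear-algebra step recovers the individual operators: taking $\boldsymbol\epsilon=(1,\ldots,1)$ and $\boldsymbol\epsilon$ with a single sign flipped at position $x$ and subtracting yields $2F_{+|x}=0$, so $F_{+|x}=0$ for all $x$, and unbiasedness then gives $\mathbf{F}\equiv 0$. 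The reverse implication is immediate.

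For the ``in particular'' part, I would combine the inclusion $\mathcal L(g,\mathbf k,d)\subseteq \mathcal Q(g,\mathbf k,d)$, which gives $V_{\mathcal L}(\mathbf{F}) \leq V_{\mathcal Q}(\mathbf{F})$, with the nonnegativity $V_{\mathcal L}(\mathbf{F})\geq 0$ just established. If $V_{\mathcal Q}(\mathbf{F})=0$ then $V_{\mathcal L}(\mathbf{F})=0$; conversely, $V_{\mathcal L}(\mathbf{F})=0$ forces $\mathbf{F}\equiv 0$ by the previous step, and then $V_{\mathcal Q}(\mathbf{F})=0$ trivially. I do not expect any real obstacle here; the only subtlety is being careful that the sign-symmetry argument uses self-adjointness of $M(\boldsymbol\epsilon)$ (so that $\lambda_{\max}(-M)=-\lambda_{\min}(M)$), which is immediate since each $F_{+|x}$ is self-adjoint.
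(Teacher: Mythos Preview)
Your proof is correct and follows essentially the same approach as the paper: both reduce $V_{\mathcal L}(\mathbf F)$ to $\max_{\boldsymbol\epsilon}\lambda_{\max}\bigl(\sum_x \epsilon_x F_{+|x}\bigr)$ and exploit the sign symmetry $M(-\boldsymbol\epsilon)=-M(\boldsymbol\epsilon)$ together with $\lambda_{\max}(-M)=-\lambda_{\min}(M)$. The only cosmetic difference is that the paper argues contrapositively (if $\mathbf F\not\equiv 0$ then some $F_{\boldsymbol\epsilon}\neq 0$, hence $V_{\mathcal L}>0$) and recovers the $F_{+|x}$ by averaging over sign patterns rather than by your single-flip subtraction, but the content is the same.
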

\begin{proof}
If $\mathbf{F} \equiv 0$, everything is clear, so let us assume that this is not the case. Recall from the proof of Theorem \ref{thm:steering-equals-inclusion} that 
$$V_{\mathcal L}(\mathbf{F})  = \sup_{\boldsymbol{\epsilon} \in \{\pm 1\}^g} \lambda_{\max} \Bigg[ \underbrace{\sum_{x=1}^g \epsilon_x F_{+|x}}_{=:F_{\boldsymbol \epsilon}}\Bigg].$$
We claim that at least one of the $F_{\boldsymbol \epsilon}$ must be non zero; otherwise, for all $x \in [g],$
$$F_{+|x} = \frac{1}{2^{g-1}} \sum_{\boldsymbol{\epsilon} \, : \, \epsilon_x = 1} F_{\boldsymbol \epsilon} = 0.$$
Consider thus $F_{\boldsymbol \epsilon} \neq 0$. Then, either $\lambda_{\max}[F_{\boldsymbol \epsilon}] > 0$ (proving the claim), or $\lambda_{\min}[F_{\boldsymbol \epsilon}] < 0$, in which case we conclude by observing that
$$\lambda_{\max}[F_{-\boldsymbol{\epsilon}}] = \lambda_{\max}[-F_{\boldsymbol{\epsilon}}] = -\lambda_{\min}[F_{\boldsymbol{\epsilon}}] > 0.$$
\end{proof}
We are interested in the set of steering constants for unbiased steering inequalities, since the largest point on the diagonal in $\Sigma_{0}(g,d)$ is related to the maximal violation of steering inequalities considered in \cite{marciniak2015unbounded}, as we show now. 

\begin{prop} \label{prop:sigma-is-delta}
	For any $s \in [0,1]$, $s(1, \ldots, 1) \in \Sigma_0(g,d)$ if and only if for all unbiased $(g, 2^{\times g}, d)$-steering inequalities $\mathbf{F}$ it holds that 
	$$V(\mathbf{F}) := \frac{V_{\mathcal Q}(\mathbf{F})}{V_{\mathcal L}(\mathbf{F})} \leq \frac 1 s,$$
	where we define $V(\mathbf{F}) = 1$ if $V_{\mathcal{L}}(\mathbf{F}) = V_{\mathcal{Q}}(\mathbf{F}) = 0$. In particular, the largest such $s$ is equal to the largest unbiased quantum violation $\gamma^0_{g, d}$.
\end{prop}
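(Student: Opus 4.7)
The plan is to exploit the fact that in the unbiased dichotomic case the convex mixture $\mathbf{F^{(\mathbf{s})}}$ degenerates into pure scaling, so that the defining condition of $\Sigma_0(g,d)$ becomes a homogeneous statement about the violation ratio $V(\mathbf{F})$.

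First I would unpack the definitions. With $\mathbf{s} = s(1,\ldots,1)$, the identity $\mathbf{F^{(\mathbf{s})}} = \mathbf{s}.\mathbf{F}$ noted in the preceding discussion specialises to $\mathbf{F^{(\mathbf{s})}} = s\mathbf{F}$ whenever $\mathbf{F}$ is unbiased. Membership in $\Sigma_0(g,d)$ thus reads: for every unbiased $\mathbf{F}$, $V_{\mathcal L}(\mathbf{F}) \leq 1$ implies $V_{\mathcal Q}(s\mathbf{F}) \leq 1$. Both functionals are positively homogeneous in $\mathbf{F}$ (manifest from Lemma \ref{lem:V-lambda_max}, since $\lambda_{\max}$ is), so $V_{\mathcal Q}(s\mathbf{F}) = sV_{\mathcal Q}(\mathbf{F})$ and the condition simplifies to $V_{\mathcal L}(\mathbf{F}) \leq 1 \Rightarrow sV_{\mathcal Q}(\mathbf{F}) \leq 1$.

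Next I would rescale to turn this implication into a pointwise inequality. For any nonzero unbiased $\mathbf{F}$, Lemma \ref{lem:LHS-zero} guarantees $V_{\mathcal L}(\mathbf{F}) > 0$, so $\mathbf{F}' := \mathbf{F}/V_{\mathcal L}(\mathbf{F})$ satisfies $V_{\mathcal L}(\mathbf{F}') = 1$ and $V_{\mathcal Q}(\mathbf{F}') = V(\mathbf{F})$. Applying the implication to $\mathbf{F}'$ yields $V(\mathbf{F}) \leq 1/s$; conversely, assuming $V(\mathbf{F}) \leq 1/s$ for every unbiased $\mathbf{F}$ and combining with $V_{\mathcal L}(\mathbf{F}) \leq 1$ gives $V_{\mathcal Q}(\mathbf{F}) = V(\mathbf{F}) V_{\mathcal L}(\mathbf{F}) \leq 1/s$, hence $sV_{\mathcal Q}(\mathbf{F}) \leq 1$. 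The degenerate case $\mathbf{F} \equiv 0$ is compatible since $V(\mathbf{F}) = 1 \leq 1/s$ for all $s \in [0,1]$.

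For the ``in particular'' statement, taking the supremum over unbiased $\mathbf{F}$ of the inequality $V(\mathbf{F}) \leq 1/s$ turns the equivalence into $\gamma^0_{g,d} \leq 1/s$, so that the largest admissible $s$ is the reciprocal of $\gamma^0_{g,d}$, thereby identifying the scalar characterisation of $\Sigma_0(g,d)$ with the maximal unbiased quantum violation. The only subtlety is handling the degenerate case $V_{\mathcal L}(\mathbf{F}) = 0$ cleanly; no genuine obstacle is expected, since Lemma \ref{lem:LHS-zero} already resolves it.
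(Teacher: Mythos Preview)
Your proposal is correct and follows essentially the same approach as the paper: both arguments reduce $\mathbf{F^{(s(1,\ldots,1))}}$ to $s\mathbf{F}$ in the unbiased case, invoke positive homogeneity of $V_{\mathcal L}$ and $V_{\mathcal Q}$, rescale by $V_{\mathcal L}(\mathbf{F})$ to normalise, and appeal to Lemma~\ref{lem:LHS-zero} for the degenerate case. Your write-up is in fact slightly more explicit than the paper's on the converse direction and correctly identifies the largest admissible $s$ as $1/\gamma^0_{g,d}$ (the paper's phrasing of the ``in particular'' clause is loose on this point).
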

\begin{proof}
	Let us prove first the ``$\implies$'' direction. Consider an $s$ as in the statement and a steering inequality $\mathbf{F}$. If $s = 0$ or $V_{\mathcal L}(\mathbf{F}) = 0$, the assertion holds by Lemma \ref{lem:LHS-zero}. Therefore, let us assume, that neither is the case. Let $v:=V_{\mathcal L}(\mathbf{F}) > 0$; we have $V_{\mathcal L}(v^{-1}\mathbf{F}) = 1$. By the definition of the set $\Sigma_0(g,d)$ and since $\mathbf{F}^{(s(1, \ldots, 1))}=s\mathbf{F}$ for unbiased steering inequalities, we then have $V_{\mathcal Q}(s/v\mathbf{F}) \leq 1$. Putting everything together, we obtain
	$$V(\mathbf{F}) = \frac{V_{\mathcal Q}(\mathbf{F})}{V_{\mathcal L}(\mathbf{F})}  \leq \frac{v/s}{v} = \frac{1}{s},$$
	as claimed. The reverse implication is proven in a similar manner.
\end{proof}

\section{Connection to steering robustness} \label{sec:relative-size}

In this section, we interpret the inclusion constants for the matrix cube and thus the steering constants as a robustness measure for quantum steering.  To do this, let us fix an arbitrary steering scenario, given by a Hilbert space dimension $d \in \mathbb N$, and an outcome vector $\mathbf k = (k_1, \ldots, k_g) \in \mathbb N^g$, $g \in \mathbb N$. Importantly, we also fix a density matrix $\bar \sigma \in \mathcal M_d^{\mathrm{sa}}$, which corresponds to the average state $\bar \sigma = \sum_a \sigma_{a|x}$. We consider the sets of all assemblages and of the assemblages admitting an LHS model with \emph{fixed} average state $\bar \sigma$, i.e.\ 
\begin{align*}
\mathcal L_{\bar \sigma}(g,\mathbf k, d) &:= \left\{(\sigma_{a|x})_{a \in [k_x], x \in [g]}\, : \, \sigma_{a|x} = \sum_{\lambda \in \Lambda} q_\lambda p(a|x,\lambda)\sigma_\lambda \, , \, \sum_{\lambda \in \Lambda} q_\lambda \sigma_\lambda = \bar \sigma  \right\},\\
    \mathcal Q_{\bar \sigma}(g,\mathbf k, d) &:= \left\{(\sigma_{a|x})_{a \in [k_x], x \in [g]} \, : \, \forall x \in [g], \, \sum_{a \in [k_x]} \sigma_{a|x} = \bar \sigma  \right\}.
\end{align*}
Here, $\Lambda$ is a finite index set, $\{q_\lambda, \sigma_\lambda\}_{\lambda \in \Lambda}$ an ensemble of quantum states, $p(a|x,\lambda)$ are conditional probabilities and $\sigma_{a|x} \in \mathcal M_d^+$ for all $a \in [k_x]$, $x \in [g]$. Clearly, we have $\operatorname{supp} \sigma_{a|x} \subseteq \operatorname{supp} \bar \sigma$ for all $a \in [k_x]$, $x \in [g]$. Hence, in the case where $\bar \sigma$ is not positive definite, we can restrict all the elements of the assemblage $\boldsymbol \sigma$ to its support, reducing the effective dimension; in what follows, we shall thus assume that $\bar \sigma$ is a positive definite matrix.

We have that $\mathcal L_{\bar \sigma}(g,\mathbf k, d) \subseteq \mathcal Q_{\bar \sigma}(g,\mathbf k, d)$, and our goal is to understand the relative size of these two sets, which can be understood as quantifying the amount of steering possible in the scenario given by $\mathbf k$ and Hilbert space dimension $d$. Abusing notation, we consider the following reference point, which is to be understood as a kind of ``center'' of these sets: 
\begin{equation} \label{eq:center}
    \boldsymbol{\bar \sigma} := \left(\sigma_{a|x} = \frac{\bar \sigma}{k_x}\right)_{a \in [k_x], x \in [g]} \in \mathcal L_{\bar \sigma}(g,\mathbf k, d) \subseteq \mathcal Q_{\bar \sigma}(g,\mathbf k, d).
\end{equation}

For a fixed positive definite matrix $\bar \sigma$, the following linear maps establish an isomorphism between the set of all assemblages with average $\bar \sigma$  and the set of $g$-tuples of POVMs having $k_1, \ldots, k_g$ outcomes, respectively. 
\begin{align}
    \{\text{assemblages with average $\bar \sigma$}\} &\longleftrightarrow \{\text{$g$-tuples of POVMs}\} \nonumber\\
    (\sigma_{a|x})_{a,x} &\longmapsto E_{a|x} := \bar \sigma^{-1/2} \sigma_{a|x} \bar \sigma^{-1/2} \label{eq:isomorphism}\\
    \bar \sigma^{1/2} E_{a|x} \bar \sigma^{1/2} =:\sigma_{a|x} &\longmapsfrom (E_{a|x})_{a,x}.\nonumber
\end{align}
This isomorphism allows us to relate steerability of states and the compatibility of measurements as in \cite{uola2015one} and leads to an interpretation of the sets of inclusion constants. To achieve this, we consider the robustness of steerability with respect to a certain form of noise. There have been several suggestions to quantify the amount of steerability in the literature such as e.g\ \cite{Skrzypczyk2014, Piani2015, Cavalcanti2016, Hsieh2016, Bavaresco2017, jencova2018incompatible, Ku2018}. See also \cite{Cavalcanti2016a} for a review. We will consider a noise model in which we take convex mixtures with $\boldsymbol{\bar \sigma}$ as in \eqref{eq:center}, which is called reduced-state steering robustness in \cite{Cavalcanti2016} (c.f.\ also \cite{Jencova2017}). It can be interpreted as adding white noise $I_d/k_x$ to the measurements $E_{a|x}$ on Alice's side when she prepares the assemblage $(\sigma_{a|x})_{a \in [k_x], x \in [g]}$. 

\begin{prop}
An assemblage with given average $\bar \sigma$, $\bar \sigma > 0$,  has an LHS model if and only if the corresponding $g$-tuple of POVMs is compatible. Taking convex combinations with the ``trivial'' assemblage corresponds to mixing in white noise for POVMs. We have, for all invertible density matrices $\bar \sigma \in \mathcal M_d^+$,
$$\{\mathbf{s} \in  [0,1]^g \, : \, \mathbf{s}.\mathcal Q_{\bar \sigma}(g,\mathbf k, d) + (\mathbf{1-s}).\boldsymbol{\bar \sigma} \subseteq \mathcal L_{\bar \sigma}(g,\mathbf k, d) \} = \Delta_{\jewel}(g,\mathbf k, d).$$
\end{prop}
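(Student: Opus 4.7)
The plan is to exploit the isomorphism \eqref{eq:isomorphism} between assemblages with fixed average $\bar\sigma$ and $g$-tuples of POVMs to reduce the displayed identity to the known equality $\Gamma(g,\mathbf k,d) = \Delta_{\jewel}(g,\mathbf k, d)$ from \eqref{eq:Gamma-is-jewel}. The argument splits into two auxiliary claims that I would verify first, after which the main statement falls out by a chain of equivalences.

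For the first claim (LHS models correspond to joint measurability under \eqref{eq:isomorphism}), I would start with an LHS decomposition $\sigma_{a|x} = \sum_\lambda q_\lambda p(a|x,\lambda)\sigma_\lambda$ with $\sum_\lambda q_\lambda\sigma_\lambda = \bar\sigma$, and set $G_\lambda := q_\lambda\,\bar\sigma^{-1/2}\sigma_\lambda\bar\sigma^{-1/2}$. These operators are positive semidefinite and satisfy $\sum_\lambda G_\lambda = \bar\sigma^{-1/2}\bigl(\sum_\lambda q_\lambda\sigma_\lambda\bigr)\bar\sigma^{-1/2} = I_d$, so $(G_\lambda)$ is a joint POVM that witnesses compatibility of the image tuple $E_{a|x} = \bar\sigma^{-1/2}\sigma_{a|x}\bar\sigma^{-1/2}$ via $E_{a|x} = \sum_\lambda p(a|x,\lambda)G_\lambda$. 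The converse direction is symmetric: given a joint POVM $(G_\lambda)$ on the measurement side, the assignments $q_\lambda := \operatorname{Tr}(\bar\sigma^{1/2}G_\lambda\bar\sigma^{1/2})$ and $\sigma_\lambda := q_\lambda^{-1}\bar\sigma^{1/2}G_\lambda\bar\sigma^{1/2}$ reconstruct a valid LHS ensemble with average $\bar\sigma$.

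For the second claim, observe that the trivial assemblage $\boldsymbol{\bar\sigma}$ from \eqref{eq:center} corresponds under \eqref{eq:isomorphism} to the tuple of trivial POVMs with effects $I_d/k_x$. Because \eqref{eq:isomorphism} is linear in $\sigma_{a|x}$, the convex combination $\mathbf{s}.\boldsymbol\sigma + (\mathbf{1-s}).\boldsymbol{\bar\sigma}$ is mapped to $s_x E_{a|x} + (1-s_x)I_d/k_x$, which is precisely the white-noise mixing that enters the definition of the compatibility region $\Gamma(g,\mathbf k, d)$.

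Combining these two observations, $\mathbf{s}$ lies in the left-hand set of the proposition if and only if every $g$-tuple of POVMs in dimension $d$ becomes compatible after white-noise mixing with parameter $\mathbf{s}$, i.e.\ $\mathbf{s} \in \Gamma(g,\mathbf k, d)$; invoking \eqref{eq:Gamma-is-jewel} then closes the argument. The only mild subtlety is that the entire chain of equivalences relies on \eqref{eq:isomorphism} being a genuine bijection, which is why invertibility of $\bar\sigma$ is essential — without it one only obtains a one-sided restriction to $\operatorname{supp}\bar\sigma$, as already remarked before the proposition. Once invertibility is assumed, the remaining verifications are routine linearity and positivity checks.
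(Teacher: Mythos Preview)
Your proposal is correct and follows essentially the same approach as the paper: both directions of the LHS/compatibility equivalence are established via the same conjugation-by-$\bar\sigma^{\pm 1/2}$ constructions, the white-noise correspondence is checked by linearity of \eqref{eq:isomorphism}, and the displayed identity is then reduced to \eqref{eq:Gamma-is-jewel}. The only cosmetic difference is that the paper explicitly discards any zero effects $E_\lambda$ in the joint POVM before forming the hidden-state ensemble, a trivial caveat you could add.
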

\begin{proof}
Let us prove the first of the three claims. Consider an assemblage $\boldsymbol \sigma$ having an LHS model for some finite index set $\Lambda$: 
$$\sigma_{a|x} = \sum_{\lambda \in \Lambda} q_\lambda p(a|x,\lambda) \sigma_\lambda.$$
On the POVM side, this corresponds to 
$$E_{a|x} = \bar \sigma^{-1/2} \sigma_{a|x} \bar \sigma^{-1/2} = \sum_{\lambda \in \Lambda}  p(a|x,\lambda) \underbrace{q_\lambda \bar \sigma^{-1/2} \sigma_\lambda  \bar \sigma^{-1/2}}_{=: E_{\lambda}}.$$
It is easy to see that the $(E_\lambda)_\lambda$ form a joint POVM for the $(E_{a|x})_{a,x}$: the only property that needs checking is normalization: $$\sum_{\lambda \in \Lambda} E_\lambda = \sum_{\lambda \in \Lambda} q_\lambda \bar \sigma^{-1/2} \sigma_\lambda  \bar \sigma^{-1/2} = \bar \sigma^{-1/2} \left(\sum_{\lambda \in \Lambda} q_\lambda  \sigma_\lambda \right) \bar \sigma^{-1/2} = I.$$

Conversely, let $(E_\lambda)_{\lambda \in \Lambda}$ be a joint POVM for the $g$-tuple of POVMs $E_{a|x}$; we shall assume that $E_\lambda \neq 0$ for all $\lambda$. There exists then conditional probabilities $p(a|x,\lambda)$ such that 
$$\bar \sigma^{-1/2} \sigma_{a|x} \bar \sigma^{-1/2}  = \sum_{\lambda \in \Lambda} p(a|x,\lambda) E_\lambda.$$
It follows that, for all $a \in [k_x]$, $x \in [g]$,
$$\sigma_{a|x}  =  \sum_\lambda \underbrace{\Tr(\bar \sigma E_\lambda)}_{=:q_\lambda} p(a|x,\lambda) \underbrace{\bar \sigma^{1/2} \frac{E_\lambda}{\Tr(\bar \sigma E_\lambda)} \bar \sigma^{1/2}}_{=:\sigma_\lambda},$$
which provides an LHS model for the assemblage $\sigma_{a|x}$. 

For the remaining claims, it remains to see that the isomorphism \eqref{eq:isomorphism} maps 
\begin{equation} \label{eq:noise-isomorphism}
    s_x E_{a|x} + (1-s_x) \frac{I}{k_x} \longleftrightarrow  s_x \sigma_{a|x} + (1-s_x) \frac{\bar \sigma}{k_x} 
\end{equation}
for any $\mathbf{s} \in [0,1]^g$. By \eqref{eq:Gamma-is-jewel}, $\mathbf{s} \in  \Delta_\diamond(g, \mathbf k, d)$ implies that all corresponding noisy POVMs are compatible. The isomorphism together with \eqref{eq:noise-isomorphism} and the first claim thus imply that for any assemblage $(\sigma_{a|x})_{a \in [k_x], x \in [g]}$ with average state $\bar \sigma$, the noisy assemblage $\mathbf{s}.(\sigma_{a|x})_{a,x} + (\mathbf{1-s}).\boldsymbol{\bar \sigma}$ with the same average state admits an LHS model. Thus 
\begin{equation*}
    \mathbf{s}.\mathcal Q_{\bar \sigma}(g,\mathbf k, d) + (\mathbf{1-s}).\boldsymbol{\bar \sigma} \subseteq \mathcal L_{\bar \sigma}(g,\mathbf k, d)
\end{equation*}
The reverse inclusion follows in a similar manner.

\end{proof}

So far, we have only found an interpretation of $\Delta_\diamond(g,d)$ in terms of steering robustness, while we are looking for an interpretation of $\Delta_\square(g,d)$. The next proposition shows that the two sets are actually equal. Note that the proposition does not follow from the duality of matrix convex sets as considered in \cite{davidson2016dilations}, but is a stronger statement, since we fix the dimension $d$.
\begin{prop} \label{prop:Deltas-are-equal}
Let $\mathcal C \subset \mathbb R^g$ be a polytope containing $0$ in its interior, $g \in \mathbb N$. Then, for all $d \in \mathbb N$
\begin{equation*}
    \Delta_{\mathcal C}(d) = \Delta_{\mathcal C^\circ}(d),
\end{equation*}
where $\mathcal C^\circ$ is the polar of $\mathcal C$. In particular, for all $g$, $d \in \mathbb N$, it holds that 
\begin{equation*}
    \Delta_\diamond(g,d) = \Delta_\square(g,d).
\end{equation*}
\end{prop}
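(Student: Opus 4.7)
The plan is to reduce both $\mathbf{s} \in \Delta_{\mathcal C}(d)$ and $\mathbf{s} \in \Delta_{\mathcal C^\circ}(d)$ to one and the same symmetric bilinear positivity condition on pairs of tuples $(\mathbf B, \mathbf X) \in \mathcal W_{\max}(\mathcal C^\circ)(d) \times \mathcal W_{\max}(\mathcal C)(d)$, and then exploit the tensor swap on $\mathbb C^d \otimes \mathbb C^d$ to conclude the equality.

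First, I would identify the hypothesis $\mathcal C \subseteq \mathcal D_{\mathbf B}(1)$, for $\mathbf B \in (\mathcal M_d^{\mathrm{sa}})^g$, with the statement $\mathbf B \in \mathcal W_{\max}(\mathcal C^\circ)(d)$. Since $\mathcal C$ is a polytope containing $0$ in its interior, it is the convex hull of its finitely many vertices $\{v^{(k)}\}$, and these vertices are (after normalization) precisely the facet normals defining $\mathcal C^\circ$. Therefore $\sum_i x_i B_i \leq I_d$ holds for every $x \in \mathcal C$ iff it holds for $x = v^{(k)}$ for each $k$, which is exactly the definition of $\mathbf B \in \mathcal W_{\max}(\mathcal C^\circ)(d)$. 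Next, I would rewrite the conclusion $\mathbf{s} . \mathcal W_{\max}(\mathcal C) \subseteq \mathcal D_{\mathbf B}$ at the level of matrices: using the standard fact (as cited in the excerpt after Theorem~\ref{thm:steering-equals-inclusion}, from \cite{helton2019dilations, bluhm2018joint}) that inclusion of free spectrahedra with a $d$-dimensional target is already decided at level $d$, this conclusion is equivalent to $\sum_i s_i B_i \otimes X_i \leq I_{d^2}$ for every $\mathbf X \in \mathcal W_{\max}(\mathcal C)(d)$. Combining the two observations yields the bilinear reformulation
\begin{equation*}
\mathbf{s} \in \Delta_{\mathcal C}(d) \iff \sum_{i=1}^g s_i B_i \otimes X_i \leq I_{d^2} \quad \forall\, \mathbf B \in \mathcal W_{\max}(\mathcal C^\circ)(d),\ \forall\, \mathbf X \in \mathcal W_{\max}(\mathcal C)(d).
\end{equation*}

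Finally, the same chain of equivalences applied to $\mathcal C^\circ$ (using $\mathcal C^{\circ\circ} = \mathcal C$, valid for polytopes with $0$ in their interior) characterizes $\Delta_{\mathcal C^\circ}(d)$ by the analogous condition with the roles of $\mathbf B$ and $\mathbf X$ exchanged, namely $\sum_i s_i X_i \otimes B_i \leq I_{d^2}$ for all $(\mathbf X, \mathbf B) \in \mathcal W_{\max}(\mathcal C)(d) \times \mathcal W_{\max}(\mathcal C^\circ)(d)$. Conjugation by the swap unitary on $\mathbb C^d \otimes \mathbb C^d$ preserves the positive semidefinite order and maps $\sum_i s_i B_i \otimes X_i$ to $\sum_i s_i X_i \otimes B_i$, so the two characterizations single out the same subset of $[0,1]^g$; this proves $\Delta_{\mathcal C}(d) = \Delta_{\mathcal C^\circ}(d)$. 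Specializing to $\mathcal C = B_{\ell_1}^g$, whose polar dual is $B_{\ell_\infty}^g$, and using $\mathcal W_{\max}(B_{\ell_1}^g) = \mathcal D_{\diamond,g}$ together with $\mathcal W_{\max}(B_{\ell_\infty}^g) = \mathcal D_{\square,g}$, immediately gives $\Delta_\diamond(g,d) = \Delta_\square(g,d)$.

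The argument is really just a symmetry observation once the bilinear reformulation is in place, so the main step that requires care is the level-$d$ reduction: this is where we genuinely use that $\mathcal W_{\max}(\mathcal C)$ is a (finite-dimensional) free spectrahedron when $\mathcal C$ is a polytope, and that the target $\mathcal D_{\mathbf B}$ is $d$-dimensional. The remark in the excerpt stressing that this phenomenon fails without finite dimension on the target is precisely why we obtain the dimension-parametrized equality $\Delta_{\mathcal C}(d) = \Delta_{\mathcal C^\circ}(d)$ rather than a universal one, and why Proposition~\ref{prop:Deltas-are-equal} is not merely an instance of matrix-convex polar duality.
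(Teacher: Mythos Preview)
Your proof is correct and follows essentially the same route as the paper: both identify $\mathcal C \subseteq \mathcal D_{\mathbf B}(1)$ with $\mathbf B \in \mathcal W_{\max}(\mathcal C^\circ)(d)$, invoke the level-$d$ reduction for free-spectrahedral inclusion into a $d$-dimensional target, rewrite membership in $\Delta_{\mathcal C}(d)$ as the symmetric bilinear condition $\sum_i s_i B_i \otimes X_i \leq I$ over $\mathcal W_{\max}(\mathcal C^\circ)(d)\times \mathcal W_{\max}(\mathcal C)(d)$, and then use the bipolar theorem together with the tensor swap to conclude. The only cosmetic differences are that you phrase the polytope duality via vertices-as-facet-normals rather than via a hyperplane set $\mathcal R$, and you make the swap unitary explicit where the paper simply says the order of tensor factors does not matter.
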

\begin{proof}
Let $\mathcal R \subset \mathbb R^{g}$ be a finite set such that for $x \in \mathbb R^g$
\begin{equation} \label{eq:set-of-hyperplanes}
    \langle h, x \rangle \leq 1 \quad\forall h \in \mathcal R \iff x \in \mathcal C, 
\end{equation}
which is equivalent to $\mathcal C = \mathcal R^\circ$.
In other words, $\mathcal R$ is a finite collection of hyperplanes determining $\mathcal C$. Note that since $0 \in \operatorname{int} \mathcal C$, such finite collection of hyperplanes exist and moreover, $\mathcal C^\circ = \operatorname{conv} \mathcal R$ (see \cite[Section IV.1]{Barvinok2002} and the bipolar theorem \cite[Theorem IV.1.2]{Barvinok2002}). 
We know that $\mathcal W_{\mathrm{max}}(\mathcal C)(1) \subset \mathcal D_{\mathbf{B}}(1)$ if and only if $\sum_{i \in [g]} x_i B_i \leq I$ for all $\mathbf{x}\in \mathcal C$, $B_i \in \mathcal M_d^{\mathrm{sa}}$. This is equivalent to $ \mathbf{B} \in \mathcal W_{\mathrm{max}}(\mathcal C^\circ)(d)$ since the extreme points of $\mathcal C$ are the hyperplanes defining $\mathcal C^\circ$ as in \eqref{eq:set-of-hyperplanes} \cite[Section IV.1]{Barvinok2002}. Thus, $\mathbf{s} \in    \Delta_{\mathcal C}(d)$ if and only if 
\begin{equation*}
    \sum_{i = 1}^g  B_i \otimes s_i X_i \leq I
\end{equation*}
for all $\mathbf{B} \in \mathcal W_{\mathrm{max}}(\mathcal C^\circ)(d)$, $\mathbf{X} \in \mathcal W_{\mathrm{max}}(\mathcal C)(d)$. Note that we only need to check inclusion at level $d$ by \cite{helton2019dilations, bluhm2018joint}.
In the same way, since $\mathcal C^{\circ \circ} = \mathcal C$ by the bipolar theorem \cite[Theorem IV.1.2]{Barvinok2002},  $\mathbf{s} \in    \Delta_{\mathcal C^\circ}(d)$ if and only if
\begin{equation*}
    \sum_{i = 1}^g  B_i \otimes s_i X_i \leq I
\end{equation*}
for all $\mathbf{X} \in \mathcal W_{\mathrm{max}}(\mathcal C^\circ)(d)$, $\mathbf{B} \in \mathcal W_{\mathrm{max}}(\mathcal C)(d)$. Since the order of the tensor factors does not matter, the assertion follows.
\end{proof}

Thus, we have found the desired interpretation of the inclusion constants for the matrix cube in terms of robustness of quantum steering, see Figure \ref{fig:assemblage-sets}:
\begin{cor}
Let $g$, $d \in \mathbb N$. For or all invertible density matrices $\bar \sigma \in \mathcal M_d^+$,
$$\{\mathbf{s} \in  [0,1]^g \, : \, \mathbf{s}.\mathcal Q_{\bar \sigma}(g, 2^{\times g}, d) + (\mathbf{1-s}).\boldsymbol{\tilde \sigma} \subseteq \mathcal L_{\bar \sigma}(g, 2^{\times g}, d) \} = \Delta_{\square}(g, d).$$
\end{cor}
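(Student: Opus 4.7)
The plan is to derive the corollary by chaining together three previously established equalities, so that essentially no new calculation is required. The left-hand side of the claimed identity is exactly the set that the preceding proposition (which I shall refer to as the ``robustness proposition'') characterizes, and the right-hand side $\Delta_\square(g,d)$ is connected to the intermediate object $\Delta_\diamond(g,d)$ by Proposition~\ref{prop:Deltas-are-equal}. So the corollary just assembles results already proved.

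More precisely, I would proceed as follows. First, specialize the robustness proposition to the dichotomic setting $\mathbf{k} = 2^{\times g}$, which immediately gives
\begin{equation*}
    \{\mathbf{s} \in [0,1]^g \, : \, \mathbf{s}.\mathcal{Q}_{\bar\sigma}(g, 2^{\times g}, d) + (\mathbf{1-s}).\boldsymbol{\bar\sigma} \subseteq \mathcal{L}_{\bar\sigma}(g, 2^{\times g}, d) \} = \Delta_\jewel(g, 2^{\times g}, d).
\end{equation*}
Second, invoke the fact noted in Section~\ref{sec:prelim} that in the dichotomic case the matrix jewel degenerates to the matrix diamond, so that $\Delta_\jewel(g, 2^{\times g}, d) = \Delta_\diamond(g,d)$. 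Third, apply Proposition~\ref{prop:Deltas-are-equal}, which asserts that for a polytope $\mathcal{C}$ containing $0$ in its interior, $\Delta_{\mathcal{C}}(d) = \Delta_{\mathcal{C}^\circ}(d)$; since the level-one sets of the matrix diamond and matrix cube are the $\ell_1$- and $\ell_\infty$-balls, which are polar duals, we obtain $\Delta_\diamond(g,d) = \Delta_\square(g,d)$. Concatenating these three equalities yields the corollary.

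The only small subtlety worth flagging is a notational one: the statement of the corollary uses $\boldsymbol{\tilde\sigma}$ whereas the robustness proposition uses $\boldsymbol{\bar\sigma}$ as defined in \eqref{eq:center}; I would simply remark that these denote the same ``center'' assemblage $(\bar\sigma/k_x)_{a,x}$. There is no genuine obstacle in the proof — all substantive work (the isomorphism \eqref{eq:isomorphism}, the identification with the jewel constants via \eqref{eq:Gamma-is-jewel}, and the polar duality argument) has already been carried out. In effect, the corollary's purpose is to record, in one clean statement, that the inclusion constants of the matrix cube admit a direct interpretation as steering robustness constants in fixed Hilbert space dimension.
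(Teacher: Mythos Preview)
Your proposal is correct and follows exactly the approach the paper intends: the corollary is stated without proof immediately after Proposition~\ref{prop:Deltas-are-equal}, precisely because it is obtained by chaining the robustness proposition (specialized to $\mathbf{k}=2^{\times g}$), the identification $\Delta_\jewel(g,2^{\times g},d)=\Delta_\diamond(g,d)$, and Proposition~\ref{prop:Deltas-are-equal}. Your observation about the $\boldsymbol{\tilde\sigma}$ versus $\boldsymbol{\bar\sigma}$ notation is also apt---it is simply a typo in the statement.
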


\begin{figure}
    \centering
    \includegraphics{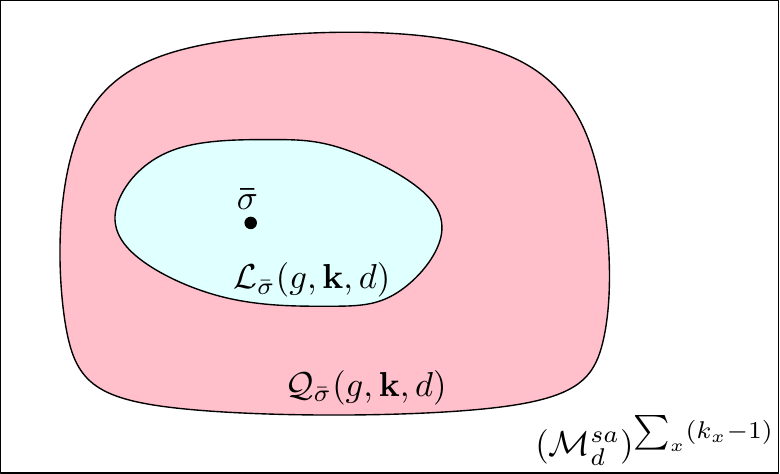}
    \caption{Inclusion of the set of assemblages having a LHS model into the set of all quantum assemblages. The inclusion set $\Delta_\square$ measures the amount by which one has to shrink $\mathcal Q_{\bar \sigma}$ (around $\bar \sigma$) in order to make it fit inside $\mathcal L_{\bar \sigma}$.}
    \label{fig:assemblage-sets}
\end{figure}

Together with the findings in Section \ref{sec:incl-csts}, we have thus found two interpretations of the set of inclusions constants for the matrix cube, one as steering robustness and the other as a violation of steering inequalities. It is reasonable that these two things are connected, since robustness measures which can be formulated as SDPs often have a dual formulation which can be interpreted as the maximal violation of certain steering inequalities \cite{Skrzypczyk2014, Cavalcanti2016, Cavalcanti2016a}. Again, we would like to emphasize that although the robustness of a given assemblage or the maximal value of a given steering inequality can be formulated as SDPs, the sets $\Delta_\square(g,d)$, and the corresponding constants $\gamma_{g,d}$ and $\gamma^0_{g,d}$ cannot: they correspond to worse-case scenarios (for assemblages or steering inequalities), and those optimization problems cannot be readily formulated as SDPs.

\section{Inclusions constants for the complex matrix cube} \label{sec:cube-csts}

In Section \ref{sec:incl-csts}, we have connected the set of steering constants $\Sigma(g,d)$ to the set of inclusion constants for the complex matrix cube,  $\Delta_\square(g,d)$. For the real matrix cube, this set was studied in \cite{ben-tal2002tractable, helton2019dilations}. We follow the strategy outlined in these works to obtain bounds on $\Delta_\square(g,d)$ for the complex matrix cube.

\subsection{\texorpdfstring{$g$}{g}-independent bounds}\label{sec:g-indep-LB}
Let us start by exhibiting lower bounds for the set $\Delta_\square(g,d)$ which only depend on the Hilbert space dimension $d$ and which are independent of the cardinality $g$ of dichotomic assemblages. 
	
First, following the proofs in  \cite{ben-tal2002tractable}, we obtain, for fixed $d$ and all $g \geq 1$,
$$\tau_*(d)(\underbrace{1, 1, \ldots, 1}_{g \text{ times}}) \in \Delta_\square(g,d),$$
where the constant $\tau_*(d)$ is related to the complex version of the quantity $ \vartheta(d)$ considered in \cite{helton2019dilations} (see also \cite[Proposition 8.3 and Remark 8.4]{bluhm2020compatibility}). It is defined by
\begin{equation}\label{eq:def-s-star-d}
\tau_*(d)=\frac{1}{\vartheta_{\mathbb C}(d)}:=\min_{\stackrel{B \in \mathcal M_d^{\mathrm{sa}}}{\|B\|_1 = 1}} \operatorname{\mathbb E} |\langle z, B z \rangle| = \min_{\stackrel{b \in \mathbb R^d}{\|b\|_1 = 1}} \operatorname{\mathbb E}\left| \sum_{i=1}^d b_i |z_i|^2\right|.
\end{equation}
In other words, $\tau_*(d)$ is the largest constant such that
$$\tau_*(d) \|B\|_1 \leq \operatorname{\mathbb E} |\langle z, B z \rangle|, \qquad \forall B \in \mathcal M_d^{\mathrm{sa}}.$$
The expectation above is considered with respect to standard complex Gaussian vectors $z \in \mathbb C^d$. Note that a similar expression could have been written by using random vectors uniformly distributed on the unit sphere of $\mathbb C^d$, at the price of normalizing the $1$-Schatten (resp.~the $\ell_1$) norm by $d$; this is this approach followed in \cite{helton2019dilations}.

\begin{thm}[\cite{ben-tal2002tractable}]
For any $g$, $d \in \mathbb N$, $d \geq 2$, and any $g$-tuple $\mathbf{B} \in (\mathcal M_d^{\mathrm{sa}})^g$, we have
$$\mathcal D_{\square,g}(1) \subseteq \mathcal D_{\mathbf{B}}(1) \implies \tau_*(d) \mathcal D_{\square,g} \subseteq \mathcal D_{\mathbf{B}}.$$
\end{thm}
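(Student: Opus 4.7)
The plan is to follow the Gaussian randomization strategy of Ben-Tal and Nemirovski, adapted to the complex setting through the characterization of $\tau_*(d)$ in \eqref{eq:def-s-star-d}. First I translate the hypothesis $\mathcal{D}_{\square,g}(1) \subseteq \mathcal{D}_{\mathbf B}(1)$: it states $\sum_i \epsilon_i B_i \leq I_d$ for every $\boldsymbol{\epsilon} \in \{\pm 1\}^g$. Because the sign set is symmetric, for any $w \in \mathbb{C}^d$ one can choose $\epsilon_i = \mathrm{sgn}\langle w, B_i w\rangle$ (these quantities being real, as the $B_i$ are self-adjoint) to deduce the scalar inequality
$$\sum_{i=1}^g |\langle w, B_i w\rangle| \leq \|w\|^2.$$
Using the fact \cite{helton2019dilations, bluhm2018joint} that $\mathcal{D}_{\mathbf A} \subseteq \mathcal{D}_{\mathbf B}$ reduces to level-$d$ inclusion when $\mathbf{B}$ has dimension $d$, it suffices to prove $\sum_i B_i \otimes \tau_*(d) X_i \leq I_d \otimes I_d$ for every $\mathbf X \in \mathcal{D}_{\square,g}(d)$.

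The second step vectorizes the test state. Fix a unit vector $\ket{\psi} \in \mathbb{C}^d \otimes \mathbb{C}^d$ and write $\ket{\psi} = \mathrm{vec}(M)$ for $M \in \mathcal{M}_d$ with $\operatorname{Tr}(M^*M) = 1$. Using the standard identity $(B \otimes X)\mathrm{vec}(M) = \mathrm{vec}(BMX^T)$,
$$\bra{\psi}\Bigl(\sum_i B_i \otimes X_i\Bigr)\ket{\psi} = \sum_i \operatorname{Tr}(C_i X_i^T), \qquad C_i := M^* B_i M \in \mathcal M_d^{\mathrm{sa}}.$$
Since $X_i$ Hermitian implies $X_i^T$ Hermitian with $\|X_i^T\|_\infty \leq 1$, the Schatten--Hölder inequality gives $|\operatorname{Tr}(C_i X_i^T)| \leq \|C_i\|_1$, so the proof reduces to establishing the uniform bound $\sum_i \|C_i\|_1 \leq 1/\tau_*(d)$.

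The main step is the Gaussian randomization. For each $i$, applying the defining inequality \eqref{eq:def-s-star-d} to $C_i \in \mathcal M_d^{\mathrm{sa}}$,
$$\tau_*(d)\, \|C_i\|_1 \leq \mathbb{E}|\langle z, C_i z\rangle| = \mathbb{E}|\langle Mz, B_i Mz\rangle|,$$
where $z \in \mathbb{C}^d$ is a standard complex Gaussian. Summing over $i$ and applying the scalar inequality of the first paragraph with the random vector $w = Mz$,
$$\tau_*(d)\sum_i \|C_i\|_1 \leq \mathbb{E}\sum_i |\langle Mz, B_i Mz\rangle| \leq \mathbb{E}\|Mz\|^2 = \operatorname{Tr}\bigl(M^* M\, \mathbb{E}[zz^*]\bigr) = \operatorname{Tr}(M^*M) = 1,$$
using $\mathbb{E}[zz^*] = I_d$. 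Combined with the previous paragraph, $\bra{\psi}(\sum_i B_i \otimes X_i)\ket{\psi} \leq 1/\tau_*(d)$ for every unit $\ket{\psi}$, which is equivalent to the desired operator inequality $\sum_i B_i \otimes \tau_*(d) X_i \leq I_d \otimes I_d$.

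The main subtlety is the dimension alignment: the Gaussian characterization \eqref{eq:def-s-star-d} is tailored to $d \times d$ Hermitian matrices and Gaussian vectors in $\mathbb{C}^d$, so it must be applied to the auxiliary operators $C_i = M^* B_i M$ obtained by sandwiching $B_i$ with the vectorized test state rather than to $B_i$ directly. This is what forces the restriction to level $n = d$ and why the resulting constant depends only on the ambient dimension of $\mathbf{B}$, not on the size of the test tuple $\mathbf X$.
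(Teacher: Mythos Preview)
Your proof is correct. Both your argument and the paper's hinge on the same Gaussian randomization step---choosing signs $\epsilon_i = \operatorname{sgn}\langle w, B_i w\rangle$ and averaging---but you reach that step by a more direct route. The paper formulates the inclusion $s\mathcal D_{\square,g}\subseteq\mathcal D_{\mathbf B}$ as the existence of a unital completely positive map (via the Choi matrix), writes this as a semidefinite program, computes the dual, and finds that the dual optimum equals $\max_{\rho}\sum_i\|\rho^{1/2}B_i\rho^{1/2}\|_1$ over density matrices $\rho$; only then does the Gaussian bound enter. You bypass the SDP/duality machinery entirely: by restricting to level~$d$ (via the cited fact), vectorizing the test vector as $\ket\psi=\operatorname{vec}(M)$, and applying Schatten--H\"older, you reduce immediately to bounding $\sum_i\|M^*B_iM\|_1$ for $\operatorname{Tr}(M^*M)=1$, which is structurally the same quantity. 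Your route is shorter and more elementary; the paper's route, on the other hand, makes explicit the connection to UCP maps and SDP duality, which is conceptually informative and ties into the broader framework of \cite{helton2019dilations,ben-tal2002tractable}.
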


\begin{proof}
Consider a $g$-tuple $B$ as in the statement, satisfying
$$\sum_{i=1}^g \epsilon_i B_i \leq I_d, \qquad \forall \boldsymbol{\epsilon} \in \{\pm 1\}^g.$$
The matrix cube free spectrahedron can be defined as $\mathcal D_{\mathbf A}$, where $A_i$ is a diagonal $2g \times 2g$ matrix, having $\pm 1$ in the positions $2i-1$, resp.~$2i$. The inclusion $ s \mathcal D_{\square,g} \subseteq \mathcal D_{\mathbf{B}}$ is equivalent, by \cite{helton2019dilations} (see also \cite[Lemma 4.4]{bluhm2018joint}), to the existence of a unital completely positive map $\Phi: \mathcal M_{2g} \to \mathcal M_d$ such that $\Phi(A_i) = s B_i$, for all $i \in [g]$. In turn, we can formulate this problem as a semidefinite program (SDP): the inclusion $s\mathcal D_{\square,g} \subseteq \mathcal D_{\mathbf{B}}$ holds if and only if
\begin{align}
\label{eq:SDP-primal} 1/s \geq \qquad \qquad \min \quad &t\\
\nonumber    \text{subj. to}\quad &\operatorname{Tr}_{2g} C = t I_d\\
\nonumber         &\operatorname{Tr}_{2g}(C \cdot A_i^\top \otimes I) = B_i, \quad \forall i \in [g]\\
\nonumber    & C \text{ is positive semidefinite}.
\end{align}

Here, $C \in \mathcal M_{2g} \otimes \mathcal M_d$ and the partial traces are over the first factor.
The matrix $C$ plays the role of the Choi matrix of the map $s^{-1}\Phi$ above. Let us compute the dual of the SDP above. Consider the Lagrangian 
\begin{align*}
    &L(t,C,\Lambda_0, \Lambda_1, \ldots, \Lambda_g, M)\\ :=& t + \langle \Lambda_0, \Tr_{2g}C - t I_d\rangle + \sum_{i=1}^g \langle \Lambda_i, B_i - \Tr_{2g}(C\cdot A_i^\top \otimes I_d)\rangle - \langle C, M \rangle\\
    =& \sum_{i=1}^g \langle \Lambda_i, B_i \rangle  + t(1-\Tr \Lambda_0) + \langle C, I \otimes \Lambda_0 - M - \sum_{i=1}^g \bar A_i \otimes \Lambda_i \rangle,
\end{align*}
where $M, \Lambda_i$ are Lagrange multipliers, with $M \geq 0$. 
The dual SDP reads
\begin{align*}
 \max \quad & \sum_{i=1}^g \langle \Lambda_i, B_i \rangle\\
     \text{subj. to}\quad &\operatorname{Tr} \Lambda_0 = 1\\
     &\sum_{i=1}^g \bar A_i \otimes \Lambda_i \leq I \otimes \Lambda_0.
\end{align*}  
Note that strong duality holds since Slater's condition is verified for the dual: $\Lambda_0 = I/d$, $\Lambda_1 = \cdots = \Lambda_g = 0$ is a strict feasible point for the dual SDP. The inequality constraints in the dual SDP can be restated as $\pm \Lambda_i \leq \Lambda_0$ for all $i \in [g]$, hence $\rho:=\Lambda_0 \geq 0$.
Using a similar argument as in \cite[Lemma 2.2]{ben-tal2002tractable}, we optimize separately over each $\Lambda_i$:
\begin{align*}&\max \{  \langle \Lambda_i , B_i\rangle \, : \,  -\rho \leq \Lambda_i \leq \rho\} \\=& \max \{  \langle \Lambda_i ,B_i\rangle \, : \,  -I \leq \rho^{-1/2}\Lambda_i\rho^{-1/2} \leq I\}\\
=& \max \{  \langle \rho^{-1/2}\Lambda_i\rho^{-1/2} , \rho^{1/2}B_i\rho^{1/2}\rangle \, : \,  -I \leq \rho^{-1/2}\Lambda_i\rho^{-1/2} \leq I\}\\
=& \max \{  \langle \rho^{-1/2}\Lambda_i\rho^{-1/2} , \rho^{1/2}B_i\rho^{1/2}\rangle \, : \,  \|\rho^{-1/2}\Lambda_i\rho^{-1/2}\|_\infty \leq 1\}\\
=& \|\rho^{1/2}B_i\rho^{1/2}\|_1.
\end{align*} 
Note that we could assume $\rho>0$ by perturbing it slightly and making the perturbation arbitrarily small in the end. Hence, the original problem reduces to 
\begin{align*}
 1/s \geq \qquad \qquad  \max \quad & \sum_{i=1}^g \|\rho^{1/2} B_i \rho^{1/2}\|_1\\
     \text{subj. to}\quad & \rho \geq 0, \,  \operatorname{Tr} \rho = 1.
\end{align*}  
To this end, fix a density matrix $\rho$ and a complex Gaussian vector $z$. With 
$$\epsilon_i := \operatorname{sign} \langle z, \rho^{1/2} B_i \rho^{1/2} z \rangle \in \{\pm 1\},$$
we have
$$\sum_{i=1}^g \epsilon_i B_i \leq I \implies \sum_{i=1}^g \epsilon_i  \langle z, \rho^{1/2} B_i \rho^{1/2} z \rangle \leq  \langle z, \rho z \rangle.$$
Taking the expectation over normally distributed random $z$, we get
$$\mathbb E \sum_{i=1}^g  |\langle z, \rho^{1/2} B_i \rho^{1/2} z \rangle| \leq \mathbb E \langle z, \rho z \rangle = \operatorname{Tr} \rho = 1.$$
Using $\mathbb E|\langle z, X z \rangle| \geq \tau_*(d) \|X\|_1$ for self-adjoint matrices $X \in \mathcal M_d^{\mathrm{sa}}$, we can conclude: $s=\tau_*(d)$ satisfies the inequality \eqref{eq:SDP-primal} for all choices of $B$. 
\end{proof}

Let $K_d \subseteq \mathbb R^d$ be the \emph{cross polytope}, the unit ball of the $\ell_1^d$ norm. The optimization problem in \eqref{eq:def-s-star-d} corresponds to minimizing the convex function 
\begin{align*}
\mathcal H:	\partial K_d &\to \mathbb R\\
	b &\mapsto \operatorname{\mathbb E} \left| \sum_{i=1}^d b_i |z_i|^2\right|.
\end{align*}
Note however that the domain of the function above is the unit sphere of the $\ell_1^d$ norm, hence it is \emph{not} convex, making the minimization problem non-trivial. We are going to make use of the convexity of $\mathcal H$ on the faces of the polytope $K_d$. 

The extreme points of $K_d$ are the $2d$ points $\pm e_i$, with $i \in [d]$. Here, the $e_i$ are the canonical basis vectors. The facets of $K_d$ are $(d-1)$-simplices indexed by the $2^d$ subsets $A \subseteq [d]$:
$$H_A := \operatorname{conv} \left( \{e_i\}_{i \in A} \cup \{-e_j\}_{j \in [d] \setminus A} \right).$$
Above, the subset $A$ marks the set of indices corresponding to positive coordinates. The symmetric group acts naturally on the faces $H_A$:
$$\sigma . b = \left(\epsilon^A_1 |b_{\sigma(1)}|, \epsilon^A_2 |b_{\sigma(2)}|, \ldots, \epsilon^A_d |b_{\sigma(d)}| \right),$$
where 
$$\epsilon^A_i = \begin{cases}
1 &\quad \text{ if } i \in A\\
-1 &\quad \text{ if } i \notin A.
\end{cases}$$
The orbit of a point $b \in H_A$ is given by $d!$ (not necessarily distinct) points having as an average the ``central'' point of $H_A$, $d^{-1} \epsilon^A$. Using the convexity of $\mathcal H$ on $H_A$, we have, for any $b \in H_A$, 
$$\mathcal H(d^{-1}\epsilon^A) = \mathcal H\left(\frac{1}{d!} \sum_{\sigma \in \mathfrak S_d} \sigma.b \right) \leq \frac{1}{d!}\sum_{\sigma \in \mathfrak S_d} \mathcal H(\sigma.b) = \mathcal H(b),$$
where we have used the fact that $\mathcal H(b) = \mathcal H(\sigma.b)$, which follows from the permutation invariance of the standard complex Gaussian distribution. Hence, the minimum of $\mathcal H$ on the compact set $\partial K_d$ is attained at one of the points $d^{-1}\epsilon^A$. Using again the permutation invariance, we can restrict the search to the subsets of the form 
$$\epsilon^A = (\underbrace{1, 1, \ldots, 1}_{k \text{ times}}, \underbrace{-1, -1, \ldots, -1}_{d-k \text{ times}}),$$
for some integer $0 \leq k \leq d$. We show in the next lemma\footnote{We thank Franck Barthe for allowing us to reproduce his elegant proof of the result in the even $d$ case. A proof of the general case, using a similar transform, was given on MathOverflow by user ``Fedor Petrov'' \cite{MO1}} that the minimum is attained at the most ``balanced'' values of $k$. 

\begin{lem}\label{lem:minimum-signed-chi2}
Consider $d$ of i.i.d.~complex Gaussian random variables $z_1, \ldots, z_d \in \mathbb C$. The minimum of the function
\begin{align*}
    \{0,1, \ldots, d\} &\to \mathbb R_+\\
    k &\mapsto \mathbb E \left| \sum_{i=1}^k |z_i|^2 - \sum_{i=k+1}^d |z_i|^2 \right|
\end{align*}
is attained at $k = \lceil d/2 \rceil$. 
\end{lem}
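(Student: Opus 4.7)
The strategy is to reduce the minimisation to a one-dimensional question about a single Beta random variable via the classical Beta--Gamma decomposition, and then to compare values at consecutive indices through a clean binomial-tail representation. Since each $|z_i|^2$ is exponential of mean $1$, the partial sums $X := \sum_{i=1}^k |z_i|^2$ and $Y := \sum_{i=k+1}^d |z_i|^2$ are independent with $X \sim \Gamma(k,1)$ and $Y \sim \Gamma(d-k,1)$. By the Beta--Gamma decomposition, $S := X + Y \sim \Gamma(d,1)$ is independent of $T_k := X/S \sim \mathrm{Beta}(k, d-k)$, and the identity $X - Y = S(2T_k - 1)$ yields
$$\mathbb{E}\Bigl|\sum_{i=1}^k |z_i|^2 - \sum_{i=k+1}^d |z_i|^2\Bigr| \;=\; \mathbb{E}[S] \cdot \mathbb{E}|2T_k - 1| \;=\; d \cdot g(k), \qquad g(k) := \mathbb{E}|2T_k - 1|,$$
so it suffices to minimise $g(k)$ over $k \in \{0, 1, \ldots, d\}$, with $g(0) = g(d) = 1$.

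Next I would produce a symmetric integral representation of $g$. From the tail formula $\mathbb{E}|T - 1/2| = \int_{1/2}^1 P(T > u)\, du + \int_0^{1/2} P(T < u)\, du$ and the identity $1 - T_k \stackrel{d}{=} T_{d-k}$, the second integral becomes $\int_{1/2}^1 P(T_{d-k} > v)\,dv$, giving
$$\tfrac{1}{2} g(k) = \int_{1/2}^1 G_k(u)\, du, \qquad G_k(u) := P(T_k > u) + P(T_{d-k} > u).$$
Realising $T_k$ as the $k$-th order statistic of $d-1$ i.i.d.\ uniform variables on $[0,1]$ turns each Beta tail into a binomial tail: $P(T_k > u) = P\bigl(\mathrm{Bin}(d-1, u) \leq k-1\bigr)$. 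The key step is then a one-line telescoping: for $2k+1 \leq d$ and $X \sim \mathrm{Bin}(d-1, u)$,
$$G_{k+1}(u) - G_k(u) = P(X = k) - P(X = d-k-1) = \binom{d-1}{k} u^k(1-u)^k \bigl[(1-u)^{d-1-2k} - u^{d-1-2k}\bigr],$$
which is non-positive on $[1/2, 1]$. Integrating gives $g(k+1) \leq g(k)$ throughout the range $k \leq \lfloor (d-1)/2 \rfloor$, and combined with the obvious symmetry $g(k) = g(d-k)$ (swap $X$ and $Y$) this identifies $k = \lceil d/2 \rceil$ as a minimiser.

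The main difficulty I anticipate is that a direct pointwise comparison of the Beta densities $f_k$ and $f_{k+1}$ is poorly adapted to the non-monotone weight $|2t-1|$: the two densities cross at $t^\star = k/(d-1)$, which lies strictly below $1/2$ when $k < d/2$, while the weight itself is minimised at $1/2$, so one cannot immediately sign $\int |2t-1|(f_{k+1} - f_k)\,dt$. The tail / order-statistic detour sidesteps precisely this obstacle: it turns the problem into a comparison between two binomial point masses at the symmetric indices $k$ and $d-k-1$, where the required sign on $[1/2, 1]$ is manifest from the algebraic factorisation above.
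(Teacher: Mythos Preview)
Your argument is correct and self-contained. The Beta--Gamma step, the tail/order-statistic rewriting, and the telescoping identity
\[
G_{k+1}(u)-G_k(u)=\binom{d-1}{k}u^k(1-u)^k\bigl[(1-u)^{d-1-2k}-u^{d-1-2k}\bigr]
\]
are all verified as stated, and together with the symmetry $g(k)=g(d-k)$ they pin down the minimiser at $k=\lceil d/2\rceil$ in both parities.

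The paper takes a genuinely different route. For even $d$ it proves a more general claim (valid for \emph{any} integrable i.i.d.\ sequence, not just exponentials) via the integral representation $\mathbb E|V|=\tfrac{2}{\pi}\int_0^\infty u^{-2}(1-\operatorname{Re}\phi_V(u))\,du$ and the inequality $\operatorname{Re}\phi_{V_k}\le|\phi_{V_k}|=|\phi_Z|^{2n}=\phi_{V_n}$. For odd $d$ the paper is not self-contained: it invokes a monotonicity result from an external reference (on the functions $f_{s,t}$ at their minimisers $p_{s,t}$) to conclude. Your approach trades the generality of the even-case argument for a uniform, elementary treatment of both parities that exploits the specific Gamma structure through the Beta--Gamma decomposition; in particular, you avoid any appeal to outside results in the odd case. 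The order-statistic/binomial detour you use to sign the increment is exactly the kind of trick the characteristic-function method does not need, but it buys you a single clean proof across all $d$.
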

\begin{proof}
We shall treat separately the cases where $d$ is odd or even. In the case where $d$ is even, the result follows from the following claim, which is more general:

\noindent\textbf{Claim.} Given $2n$ integrable i.i.d.~random variables $Z_i$, we have, for all $0 \leq k \leq 2n$, 
$$\mathbb E | \underbrace{Z_1 + \cdots + Z_k - Z_{k+1} - \cdots - Z_{2n}}_{V_k} | \geq \mathbb E| \underbrace{Z_1 + \cdots + Z_n - Z_{n+1} - \cdots - Z_{2n}}_{V_n} |.$$

To prove the claim, let us consider the $\phi_Z(u) := \mathbb E \exp(\mathrm{i}uZ)$, the \emph{characteristic function} of one of the random variables  $Z_i$. Since, for all real $t$, we have (see \cite[Eq. (21.16)]{bronstein})
$$|t| = \frac{2}{\pi}\int_0^\infty \frac{1-\cos(tu)}{u^2} \, \mathrm{d}u,$$
it follows that, for an integrable random variable $V$, we have
\begin{equation}\label{eq:E-abs-Re-phi}
\mathbb E|V| = \frac{2}{\pi}\int_0^\infty \frac{1-\mathbb E\cos(uV)}{u^2} \, \mathrm{d}u = \frac{2}{\pi} \int_0^\infty \frac{1-\Re\phi_V(u)}{u^2} \, \mathrm{d}u.    
\end{equation}
In our setting, we get 
$$\phi_{V_k}(u) = \phi_Z(u)^k\phi_Z(-u)^{2n-k} = \phi_Z(u)^k\overline{\phi_Z(u)}^{2n-k},$$
hence
$$\Re \phi_{V_k}(u) \leq |\phi_{V_k}(u)| = |\phi_{Z}(u)|^{2n} = \phi_{V_n}(u) = \Re \phi_{V_n}(u).$$
Plugging the above inequality in \eqref{eq:E-abs-Re-phi} yields the claim, and thus the result for even $d$. 

\medskip

We now move to the case of odd $d$, which is more intricate\footnote{An elegant proof of the general case of non-negative random variables was given on MathOverflow by user ``Steve'' \cite{MO2}}. We shall make use of \cite[Proposition 12.8]{helton2019dilations}, which states that, for all integers $s,t$ with $s \geq t$,
\begin{equation}\label{eq:fst-increasing}
f_{s,t}(p_{s,t}) \leq f_{s+2,t-2}(p_{s+2,t-2}),
\end{equation}
where the function $f_{s,t}$ is defined by
$$f_{s,t}(p) = \operatorname{\mathbb E}\left| a\sum_{i=1}^s x_i^2 - b\sum_{j=1}^{t} x_j^2 \right|,$$
where $D:=s+t$, and $a,b \geq 0$ are determined by $sa+tb = D$ and $p = b/(a+b)$. Here, the vector $(x_1, \ldots, x_D)$ has uniform distribution on the unit sphere sphere of $\mathbb C^D$; one could have considered a standard complex Gaussian vector, with an extra factor of $D$. The value $p_{s,t}$ is the point $p$ where the function $f_{s,t}(p)$ attains its minimum on $[0,1]$; it is shown in \cite[Proposition 9.2]{helton2019dilations} that for even $D$, $p_{s,t} = 1/2$, hence $a=b=1$.
Going back to our setting, with $d=2n+1$ odd, \eqref{eq:fst-increasing} translates to the fact that the function 
\begin{align*}
    \{n+1, n+2, \ldots, 2n+1\} &\to \mathbb R_+\\
    k &\mapsto \operatorname{\mathbb E}\left| \sum_{i=1}^k |z_i|^2 - \sum_{j=k+1}^{2n+1} |z_j|^2 \right|\\
    &\quad = \operatorname{\mathbb E}\left| \sum_{i=1}^{2k} x_i^2 - \sum_{j=2k+1}^{4n+2} x_j^2 \right|
\end{align*}
is increasing. Indeed, we have used $s:=2k \geq t:= 2(2n+1-k)$, which holds as soon as $k \geq n+1 = \lceil d/2 \rceil$. In this case, $D = 4n+2$ is even. The proof is complete.
\end{proof}

We now have all the ingredients to state the main result of this section. 

\begin{thm} \label{thm:g-independent-bounds}
	The parameter $\tau_*(d)$, giving the $g$-independent inclusion constant for the complex matrix cube, is given by
	\begin{equation}
		\tau_*(d) = 
		4^{-n}\binom{2n}{n}, \qquad \text{ with } n:= \lfloor d/2 \rfloor.
	\end{equation}
\end{thm}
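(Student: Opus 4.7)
The reduction outlined in the paragraphs preceding the theorem already collapses the defining minimization: combined with Lemma~\ref{lem:minimum-signed-chi2}, it gives
$$\tau_*(d) \;=\; \frac{1}{d}\,\mathbb{E}\left|\sum_{i=1}^{m}|z_i|^2 - \sum_{i=m+1}^{d}|z_i|^2\right|, \qquad m:=\lceil d/2\rceil.$$
So the plan is to evaluate this single expectation: first express it via the characteristic-function identity \eqref{eq:E-abs-Re-phi} established in the proof of Lemma~\ref{lem:minimum-signed-chi2}, then reduce the resulting integral to a standard Beta integral by an integration by parts, and finally apply Legendre's duplication formula to match the stated binomial expression.

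First I would compute the characteristic function of $V := \sum_{i=1}^{m}|z_i|^2 - \sum_{i=m+1}^{d}|z_i|^2$. Since each $|z_i|^2$ is $\mathrm{Exp}(1)$ with characteristic function $(1-iu)^{-1}$, independence gives $\phi_V(u) = (1-iu)^{-m}(1+iu)^{m-d}$. For even $d = 2n$ one has $m = n = d-m$, so $\phi_V(u) = (1+u^2)^{-n}$ is already real; for odd $d = 2n+1$ one has $m-(d-m) = 1$, and multiplying numerator and denominator by $(1+iu)$ yields $\phi_V(u) = (1+iu)/(1+u^2)^{n+1}$. In both parities, therefore, $\Re\phi_V(u) = (1+u^2)^{-m}$. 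Plugging this into \eqref{eq:E-abs-Re-phi} and integrating by parts with $g(u) = -1/u$ (the boundary terms vanish), the integral collapses to
$$\mathbb{E}|V| \;=\; \frac{2}{\pi}\int_0^\infty \frac{2m\,du}{(1+u^2)^{m+1}} \;=\; \frac{2m}{\sqrt\pi}\cdot\frac{\Gamma(m+\tfrac12)}{\Gamma(m+1)},$$
where the second equality is the standard Beta evaluation $\int_0^\infty (1+u^2)^{-(m+1)}du = \tfrac{\sqrt\pi}{2}\Gamma(m+\tfrac12)/\Gamma(m+1)$. Duplication then gives $\mathbb{E}|V| = \tfrac{2m}{4^m}\binom{2m}{m}$.

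Finally I would divide by $d$ and verify both parities. For $d = 2n$ one has $m = n$ and the result $\tau_*(2n) = 4^{-n}\binom{2n}{n}$ is immediate; for $d = 2n+1$ one has $m = n+1$, and using $\binom{2n+2}{n+1} = \tfrac{2(2n+1)}{n+1}\binom{2n}{n}$ the factors $(2n+1)$ and $(n+1)$ cancel to yield the same answer. The only conceptually nontrivial ingredient is Lemma~\ref{lem:minimum-signed-chi2}, which locates the minimizing $k$; after that the obstacle is mild and amounts to handling the two parities uniformly via the clean identity $\Re\phi_V(u) = (1+u^2)^{-\lceil d/2\rceil}$, which is what makes the odd case simplify as nicely as the even one.
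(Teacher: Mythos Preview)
Your argument is correct and takes a genuinely different, more elementary route than the paper. The paper proceeds by identifying the law of $V$ explicitly: in the even case it recognizes the difference of two i.i.d.\ $\chi^2_{2n}$ variables as a variance--gamma distribution, writes down its density via a modified Bessel function, and integrates using a DLMF identity; in the odd case it invokes the density of a difference of unequal-parameter $\chi^2$ variables in terms of Whittaker functions, then pushes through hypergeometric identities and Kummer's theorem to establish the claim $\mathbb{E}|D_{s,s}| = \mathbb{E}|D_{s,s-1}|$. Your approach bypasses all of this special-function machinery by reusing the characteristic-function identity \eqref{eq:E-abs-Re-phi} that the paper already established inside Lemma~\ref{lem:minimum-signed-chi2}: the observation that $\Re\phi_V(u) = (1+u^2)^{-\lceil d/2\rceil}$ holds uniformly in both parities is exactly what collapses the odd case to the even one without any hypergeometric computation, and the remaining integral is just a Beta integral after one integration by parts. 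What your approach buys is brevity and a unified treatment of the two parities; what the paper's approach buys is an explicit density for $V$, which is not actually needed here but may be of independent interest.
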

\begin{proof}
	Let us first consider the case of even $d=2n$. Using Lemma \ref{lem:minimum-signed-chi2}, we have to evaluate 
	$$\frac{1}{2n}\operatorname{\mathbb E}\left| \sum_{i=1}^n |z_i|^2 - \sum_{j=n+1}^{2n} |z_j|^2 \right| = \frac{1}{4n}\operatorname{\mathbb E}\left| \sum_{i=1}^n |x_i|^2+|y_i|^2 - \left(\sum_{j=n+1}^{2n} |x_j|^2+|y_j|^2  \right) \right|,$$
	where we have written $z_j = (x_j + \mathrm{i} y_j)/{\sqrt 2}$, where $x_1, \ldots, x_{2n}, y_1, \ldots, y_{2n}$ are i.i.d.~standard \emph{real} Gaussian random variables. In other words, we have to compute $\operatorname{\mathbb E}|A-B|$, where $A,B$ are i.i.d.~$\chi^2$ random variables with $2n$ degrees of freedom. 
	
	The distribution of $A-B$ has been considered in \cite{dist-diff-chi-squared}, we reproduce the main ideas below. The moment generating function of either of the random variables $A,B$ is given by \cite[Section 2.6.1]{ross2019introduction}
	$$M_A(t) = \operatorname{\mathbb E} \exp(tA) = (1-2t)^{-n}.$$
	Hence, the moment generating function of the difference $A-B$ is 
	$$M_{A-B}(t) = M_A(t)M_B(-t) = (1-4t^2)^{-n} = \left( \frac{1/4}{1/4 - t^2} \right)^n,$$
	where one recognizes the formulas for the \emph{variance-gamma distribution}, see \cite[Eq.~(24)]{haas2009financial}  or \cite[Section 4.1.1]{kotz2012laplace}
	$$M_{VG}(t) = \exp(\mu t) \left( \frac{\alpha^2-\beta^2}{\alpha^2 - (\beta+t)^2} \right)^\lambda,$$
	with parameters $\mu = 0$, $\lambda = n$, $\alpha = 1/2$, $\beta = 0$. The variance-gamma distribution has probability density function on the real line given by
	$$f_{A-B}(x) = \frac{1}{\sqrt \pi 4^n \Gamma(n)} |x|^{n-1/2} K_{n-1/2}(|x|/2),$$
	where $K$ is the modified Bessel function of the second kind. Using \cite[10.43.E19]{DLMF}, we compute 
	$$\int_{-\infty}^\infty |x| f_{A-B}(x) \,  \mathrm{d}x = \frac{4}{\sqrt \pi} \frac{\Gamma(n+1/2)}{\Gamma(n)},$$
	which leads in turn to 
	$$\tau_*(2n) = \frac{\Gamma(n+1/2)}{\sqrt \pi \Gamma(n+1)},$$
	the announced formula for the $d=2n$ even case (see also \cite[Theorem 13.1(b)]{helton2019dilations}).
	
	\medskip

	Let us now consider the $d=2n+1$ odd case. From Lemma \ref{lem:minimum-signed-chi2}, we have
	$$\tau_*(2n+1) = \frac{1}{2n+1}\operatorname{\mathbb E}\left| \sum_{i=1}^{n+1} |z_i|^2 - \sum_{j=n+2}^{2n+1} |z_j|^2 \right| = \frac{1}{2(2n+1)}\operatorname{\mathbb E}\left| \sum_{i=1}^{2n+2} x_i^2- \sum_{j=2n+3}^{4n+2} x_j^2\right|.$$
	The formula in the statement easily follows from the following claim: 
	
\noindent\textbf{Claim.} Given $4n+4$ i.i.d.~standard Gaussian random variables $x_i$, $y_j$, we have
$$\operatorname{\mathbb E}\left| \sum_{i=1}^{2n+2} x_i^2- \sum_{j=1}^{2n} y_j^2\right| = \operatorname{\mathbb E}\left| \sum_{i=1}^{2n+2} x_i^2- \sum_{j=1}^{2n+2} y_j^2\right|$$

To prove the claim,	let us consider the general case of the difference of two independent $\chi^2$ random variables, with $2s$, resp.~$2t$ degrees of freedom:
$$D_{s,t}:=\sum_{i=1}^{2s} x_i^2 - \sum_{j=1}^{2t} y_j^2,$$
where $x_i,y_j$ are i.i.d.~standard Gaussians. The probability distribution function $\mathrm{d}D_{s,t}/\mathrm{d}x$ of $D_{s,t}$ can be explicitly computed with the help of Whittaker special functions \cite[Eq.~(5)]{klar2015note} (using that $W[\kappa, \mu, x] = W[\kappa, -\mu, x]$):
\begin{align*}
    \frac{\mathrm{d}D_{s,t}}{\mathrm{d}x} = \begin{cases}
    \displaystyle\frac{x^{(s + t)/2 - 1}}{\Gamma(s)2^{s + t}} W[(s - t)/2, (s + t - 1)/2, x]&\qquad \text{ if } x \geq 0\\
    \displaystyle\frac{|x|^{(s + t)/2 - 1}}{\Gamma(t)2^{s + t}} W[(t - s)/2, (s + t - 1)/2, |x|]&\qquad \text{ if } x < 0.
    \end{cases}
\end{align*}
One can then compute $\mathbb E |D_{s,t}|$ using \cite[13.23.4]{DLMF} and \cite[15.8.1]{DLMF}:
\begin{equation*}
    2 \binom{s+t}{s-1}\pFq{2}{1}{t,s + t + 1,}{t + 2}{-1}+2 \binom{s+t}{t-1}\pFq{2}{1}{s,s + t + 1,}{s + 2}{-1}
\end{equation*}
where $\pFq{2}{1}{a,b}{c}{z}$ is the hypergeometric function. The fact that $\mathbb E |D_{s,s}| = \mathbb E |D_{s,s-1}|$ for $s\geq 2$ can be checked by direct computation. Indeed, using Kummer's theorem \cite[15.4.26]{DLMF}, one has 
\begin{equation} \label{eq:kummer}
   \mathbb E |D_{s,s}| = 4\binom{2s}{s+1} \pFq{2}{1}{2s + 1,s}{s + 2}{-1} = 4^{1-s}s\binom{2s}{s}. 
\end{equation}
For $\mathbb E |D_{s,s-1}|$, use first  \cite[15.5.16]{DLMF} to compute
$$\pFq{2}{1}{2s,s-1}{s + 1}{-1} = \frac 1 2 \pFq{2}{1}{2s - 1,s-1}{s + 1}{-1} + \frac{1}{s+1}\pFq{2}{1}{2s,s-1}{s + 2}{-1}$$
and then  \cite[15.5.14]{DLMF} to obtain
$$\pFq{2}{1}{2s,s}{s + 2}{-1} = \frac{2s}{s-1} \pFq{2}{1}{2s + 1,s}{s + 2}{-1} - \frac{s+1}{s-1}\pFq{2}{1}{2s,s}{s + 1}{-1}.$$
Finally, express the functions on the right hand side using again Kummer's theorem to conclude. The assertion follows by \eqref{eq:kummer} with $s=n+1$.
\end{proof}

\subsection{Optimality of the bound}

In this section, we show that the constant $\tau_*(d)$ from \eqref{eq:def-s-star-d} is optimal. We start with a lemma.
\begin{lem} \label{lem:nets}
Let $d \in \mathbb N$ and let $B \in \mathcal M_d^{\mathrm{sa}}$. Let $K\in \mathbb N$, $\delta > 0$ and let $\{U_i\}_{i \in [K]}$ be an $\delta$-net on $(\mathcal U(d), \norm{\cdot}_\infty)$. Then, there exists a probability distribution $(u_1, \ldots, u_K)$ such that
\begin{equation*}
 \forall v \in \mathbb C^d, \, \|v\|_2 =1, \qquad    \left| \int_{\mathcal U(d)} |\langle v, U^\ast B U v \rangle| d\mu(U) - \sum_{i = 1}^K u_i |\langle v, U_i^\ast B U_i v \rangle|  \right| \leq  2 \delta \norm{B}_1,
\end{equation*}
where $\mu$ be the normalized Haar measure on $\mathcal U(d)$.
\end{lem}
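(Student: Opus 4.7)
The plan is to turn the $\delta$-net into a discrete probability measure by pushing forward the Haar measure through a nearest-point map, and then use a uniform Lipschitz estimate for the functions $f_v(U) := |\langle v, U^{\ast} B U v \rangle|$ to bound the approximation error independently of $v$.

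First I would construct a measurable partition $\{P_i\}_{i \in [K]}$ of $\mathcal{U}(d)$ subordinate to the $\delta$-net. Since $\{U_i\}_{i \in [K]}$ is a $\delta$-net, the balls $\{U : \|U - U_i\|_\infty \leq \delta\}$ cover $\mathcal U(d)$, so one can set $P_i := \{U : \|U - U_i\|_\infty \leq \delta\} \setminus \bigcup_{j < i} P_j$, producing disjoint measurable sets with $P_i \subseteq \{U : \|U - U_i\|_\infty \leq \delta\}$ whose union is all of $\mathcal U(d)$. Define $u_i := \mu(P_i)$; this is a probability distribution on $[K]$.

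Next I would establish that the map $U \mapsto f_v(U) = |\langle v, U^{\ast} B U v \rangle|$ is Lipschitz, with a constant that does not depend on the unit vector $v$. For any $U, U' \in \mathcal U(d)$, the reverse triangle inequality for $|\cdot|$ together with $|\langle v, X v \rangle| \leq \|X\|_\infty$ for unit $v$ gives
\begin{equation*}
|f_v(U) - f_v(U')| \leq \|U^{\ast} B U - (U')^{\ast} B U'\|_\infty.
\end{equation*}
Writing $U^{\ast} B U - (U')^{\ast} B U' = U^{\ast} B (U - U') + (U - U')^{\ast} B U'$ and using submultiplicativity of the operator norm together with $\|U\|_\infty = \|U'\|_\infty = 1$, I get
\begin{equation*}
\|U^{\ast} B U - (U')^{\ast} B U'\|_\infty \leq 2 \|B\|_\infty \|U - U'\|_\infty \leq 2 \|B\|_1 \|U - U'\|_\infty,
\end{equation*}
using $\|B\|_\infty \leq \|B\|_1$ in the last step to match the norm in the statement.

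Finally I would combine the two ingredients. Since $\sum_i \mathbf{1}_{P_i} = 1$ almost everywhere and $\mu(P_i) = u_i$,
\begin{align*}
\left| \int_{\mathcal U(d)} f_v(U)\, d\mu(U) - \sum_{i=1}^K u_i f_v(U_i) \right|
&= \left| \sum_{i=1}^K \int_{P_i} \bigl(f_v(U) - f_v(U_i)\bigr) d\mu(U) \right| \\
&\leq \sum_{i=1}^K \int_{P_i} |f_v(U) - f_v(U_i)| \, d\mu(U) \\
&\leq 2 \delta \|B\|_1 \sum_{i=1}^K \mu(P_i) = 2 \delta \|B\|_1,
\end{align*}
where in the last inequality I used that $\|U - U_i\|_\infty \leq \delta$ on $P_i$ together with the Lipschitz bound. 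This yields exactly the estimate in the statement, uniformly in the unit vector $v$.

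There is no real obstacle here; the only mild subtlety is that one must choose the weights $u_i$ and the partition $P_i$ \emph{before} optimizing over $v$, so that the probability distribution is the same for every unit vector. The Voronoi-style construction above handles this automatically because it depends only on the net $\{U_i\}$ and on the metric $\|\cdot\|_\infty$, not on $B$ or on $v$.
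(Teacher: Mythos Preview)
Your proof is correct and follows essentially the same approach as the paper: both construct the same Voronoi-style disjoint refinement of the $\delta$-balls, set $u_i$ to be the Haar mass of the $i$-th cell, and bound the error via the reverse triangle inequality together with $\|U^\ast B U - U_i^\ast B U_i\| \leq 2\delta\|B\|_1$ on each cell. Your exposition is in fact slightly more explicit about the Lipschitz step (going through $\|\cdot\|_\infty$ and then using $\|B\|_\infty \le \|B\|_1$), and about the point that the partition is chosen independently of $v$.
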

\begin{proof}
Let $\{U_i\}_{i \in [K]}$ be as in the assertion. Then, $\{\mathcal B_\delta(U_i)\}_{i \in [K]}$ covers $\mathcal U(d)$, where 
\begin{equation*}
    \mathcal B_\delta(U_i) := \{V\in \mathcal U(d): \norm{V - U_i}_\infty \leq \delta\}.
\end{equation*}
From this cover, we can construct a disjoint cover of cardinality $ K$ by removing intersections
\begin{equation*}
    \mathcal S_1 := B_\delta(U_1), \qquad \mathcal S_i := \mathcal B_\delta(U_i) \setminus \left(\bigcup_{j = 1}^{i-1} \mathcal S_j\right).
\end{equation*}
Set $u_i := \mu(\mathcal S_i)$. Note that some of the $\mathcal S_i$ might be empty, such that $b_i = 0$. It is clear that $u_i \geq 0$ since the Haar measure is positive and 
\begin{equation*}
    \sum_{i = 1}^J u_i = \mu\left(\bigcup_{i = 1}^K \mathcal S_i\right) = \mu(\mathcal U(d)) = 1
\end{equation*}
by normalization. Thus,
\begin{align*}& \left| \int_{\mathcal U(d)} |\langle v, U^\ast B U v \rangle| d\mu(U) - \sum_{i = 1}^K u_i |\langle v, U_i^\ast B U_i v \rangle|  \right| \\
    &\leq\sum_{i = 1}^K \int_{\mathcal S_i} \Big| |\langle v, U^\ast B U v \rangle| - |\langle v, U_i^\ast B U_i v \rangle| \Big| d\mu(U)\\
    &\leq\sum_{i = 1}^K \int_{\mathcal S_i} |\langle v, (U^\ast B U - U_i^\ast B U_i) v \rangle| d\mu(U)\\
    &\leq\sum_{i = 1}^K \int_{\mathcal S_i} \norm{U^\ast B U - U_i^\ast B U_i}_\infty d\mu(U) \\
    &\leq 2 \delta \norm{B}_1 \sum_{i = 1}^J \int_{\mathcal S_i} d\mu(U) = 2 \delta \norm{B}_1
\end{align*}
Here, we have used the triangle inequality for the first inequality, the reverse triangle inequality for the second one, and the fact that $\mathcal S_i \subseteq \mathcal B_\delta(U_i)$ in the last inequality. 
\end{proof}
\begin{remark}
It follows from  \cite[Lemma 9.5]{ledoux1991probability} that there exist $\delta$-nets as above with cardinality
\begin{equation*}
    K \leq \left(1 + \frac{2}{\delta}\right)^{2d}.
\end{equation*}
\end{remark}

We show now, following the idea from \cite[Section 5.4]{helton2019dilations}, that the constant $\tau_*(d)$ from \eqref{eq:def-s-star-d} is optimal. To this end, let us fix a dimension $d$ and a constant $s > \tau_*(d)$. Hence, there exists a matrix $B$ such that 
$$s \|B\|_1 > \mathbb E_z |\langle z, Bz \rangle| = d \mathbb E_\phi |\langle \phi, B\phi \rangle|,$$
where the first expectation is over a standard complex Gaussian vector $z \in \mathbb C^d$, while the second one is over a uniform vector $\phi$ on the unit sphere of $\mathbb C^d$. Moreover, we shall normalize $B$ such that $\mathbb E_\phi |\langle \phi, B\phi \rangle| = 1$.

\begin{prop} \label{prop:optimality-of-constant}
For a matrix $B$ as above, there exists, for a $\delta$ small enough, a $\delta$-net $(U_1, \ldots, U_K)$ on $(\mathcal U(d), \norm{\cdot}_\infty)$, and a probability distribution $(u_1, \ldots, u_K)$  such that for $A_i := (1+2 \delta \norm{B}_1)^{-1} u_i U_i^*BU_i$, we have 
$$\mathcal D_{\square, g}(1) \subseteq \mathcal D_{\mathbf A}(1)\qquad \text{and} \qquad s\mathcal D_{\square, g} \nsubseteq \mathcal D_{\mathbf A}.$$
\end{prop}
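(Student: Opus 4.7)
The strategy, following \cite[Section 5.4]{helton2019dilations}, is to build $A_i$ as weighted Haar-rotated copies of $B$ along a fine net, so that the level-1 inclusion is forced by the identity
\[
\int_{\mathcal U(d)} |\langle v, U^* B U v\rangle|\, d\mu(U) = \mathbb E_\phi |\langle \phi, B\phi\rangle| = 1
\]
(coming from Haar invariance of $Uv$ and the normalization of $B$), while the higher-level violation is exhibited via the maximally entangled state. Concretely, I would fix $\delta > 0$ (to be chosen later), apply Lemma \ref{lem:nets} to get a $\delta$-net $(U_1, \ldots, U_K)$ and probability weights $(u_1, \ldots, u_K)$ such that
\[
\Bigl|\sum_{i=1}^K u_i |\langle v, U_i^* B U_i v\rangle| - 1\Bigr| \leq 2\delta \|B\|_1 \quad \text{for every unit } v \in \mathbb C^d,
\]
and take $g := K$.

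For the level-1 inclusion, for every $\boldsymbol{\epsilon} \in \{\pm 1\}^K$ and every unit $v$,
\[
\sum_{i=1}^K \epsilon_i \langle v, A_i v\rangle \leq \frac{1}{1+2\delta\|B\|_1} \sum_{i=1}^K u_i |\langle v, U_i^* B U_i v\rangle| \leq 1,
\]
so $\sum_i \epsilon_i A_i \leq I_d$, which is exactly $\mathcal D_{\square, K}(1) \subseteq \mathcal D_{\mathbf A}(1)$. The normalization factor $(1+2\delta\|B\|_1)^{-1}$ in the definition of $A_i$ was introduced precisely to absorb the discretization overshoot, so this step succeeds for every $\delta > 0$.

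For the failure $s \mathcal D_{\square, K} \nsubseteq \mathcal D_{\mathbf A}$ I would test at level $n = d$ with the maximally entangled vector $\psi = d^{-1/2}\sum_j |j\rangle|j\rangle$, using the identity $\langle \psi, M \otimes N \psi\rangle = d^{-1}\operatorname{Tr}(MN^T)$. By Schatten-norm duality, for each $i$ I can pick a self-adjoint $X_i$ with $\|X_i\|_\infty \leq 1$ and $\operatorname{Tr}(A_i X_i^T) = \|A_i\|_1$ (take $X_i$ to be the transpose of the sign matrix of $A_i$ in its eigenbasis, which remains Hermitian since the spectrum is real). Using $\|A_i\|_1 = u_i \|B\|_1/(1+2\delta\|B\|_1)$ we then get
\[
\langle \psi, \sum_{i=1}^K A_i \otimes s X_i\, \psi\rangle = \frac{s}{d}\sum_{i=1}^K \|A_i\|_1 = \frac{s\|B\|_1}{d(1+2\delta\|B\|_1)}.
\]
Since $s\|B\|_1 > d$ by hypothesis, choosing $\delta < (s\|B\|_1 - d)/(2d\|B\|_1)$ makes this strictly greater than $1 = \langle \psi, I \otimes I\, \psi\rangle$, producing the required violator $(sX_1, \ldots, sX_K) \in s\mathcal D_{\square, K}(d) \setminus \mathcal D_{\mathbf A}(d)$.

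I expect no serious obstacle: the two conditions are compatible without tension, since the level-1 inclusion is absorbed into the normalization and holds for every $\delta > 0$, whereas only the higher-level violation pins down how small $\delta$ must be. The only mildly delicate point is to verify that a \emph{Hermitian} $X_i$ achieving the trace-norm duality against $A_i$ exists with $\|X_i\|_\infty \leq 1$, which is immediate by spectrally decomposing $A_i$ and conjugating the sign matrix.
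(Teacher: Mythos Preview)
Your proposal is correct and essentially identical to the paper's proof: both establish the level-1 inclusion via Lemma~\ref{lem:nets} with the normalization $(1+2\delta\|B\|_1)^{-1}$ absorbing the discretization error, and both exhibit the violation at level $d$ via the maximally entangled state against $X_i = (U_i^* \operatorname{sign}(B)\, U_i)^\top$ (the paper writes this as $(U_i^* \operatorname{Pol}(B)^* U_i)^\top$, which for Hermitian $B$ is the same operator). Your threshold $\delta < (s\|B\|_1 - d)/(2d\|B\|_1)$ also coincides with the paper's.
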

\begin{proof}
Let us prove the first inclusion. We shall use $\{U_1, \ldots, U_K\}$ and $\{u_1, \ldots, u_K\}$ as in Lemma \ref{lem:nets} for a (small) value of $\delta$ which we shall fix later. Let $\boldsymbol{\epsilon} \in \{\pm 1\}^g = \mathcal D_{\square,g}(1)$. For any  unit vector $v \in \mathbb C^d$, we have 
\begin{align*}
(1 + 2\delta \norm{B}_1)\langle v, \sum_{i=1}^g \epsilon_i A_i v \rangle &= \langle v, \sum_{i=1}^g u_i \epsilon_i U_i^*BU_i v \rangle\\
&\leq\sum_{i=1}^g u_i |\langle v,   U_i^*BU_i v \rangle| \\
&\leq \int |\langle v,   U^*BU v \rangle| \mathrm{d}\mu(U) + 2\delta \norm{B}_1\\
&\leq \int |\langle \phi,   B \phi \rangle| \mathrm{d}\phi + 2\delta \norm{B}_1 \\
&= 1+ 2\delta \norm{B}_1,
\end{align*}
so $\lambda_{\max}(\sum_i \epsilon_i A_i) \leq 1$. Here, we have used Lemma \ref{lem:nets}.

Consider now the $g$-tuple $\mathbf{X} \in \mathcal D_{\square,g}(d)$ given by the unitary operators 
\begin{equation*}
    X_i = (U_i^* \operatorname{Pol}(B)^*U_i)^\top.
\end{equation*}
Here, $\operatorname{Pol}(B)$ is the unitary arising from the polar decomposition as $B = \operatorname{Pol}(B)|B|$ and we take the transposition with respect to the canonical basis. We have, using the maximally entangled state $\mathbb C^d \otimes \mathbb C^d \ni \Omega = d^{-1/2} \sum_{i=1}^d e_i \otimes e_i$:
$$(1 + 2\delta \norm{B}_1) \langle \Omega, \sum_{i=1}^g A_i \otimes s X_i \Omega \rangle = \frac s d \sum_{i=1}^g \Tr[U_i^*BU_i\cdot X_i^\top] =  \frac s d \|B\|_1 > 1,$$
proving that $s\mathbf{X}$ is not an element of $\mathcal D_{\mathbf A}(d)$ if we choose $\delta$ small enough. Indeed, $\frac s d \|B\|_1 = 1 + \nu$ for some $\nu > 0$, so it suffices to choose $\delta < \nu/(2\norm{B}_1)$.
\end{proof}

\subsection{\texorpdfstring{$d$}{d}-independent bounds}

The previous section provided inclusions constants independent of the dimension of the matrix cube $g$, but dependent on the dimension $d$. In this section, we will consider the converse situation, where the inclusion constants depend on $g$, but not on $d$. Theorem 6.7 of \cite{passer2018minimal} asserts that for any $g$, $d \in \mathbb N$,
\begin{equation*}
    \Delta_\square(g,d) \supseteq \left\{\mathbf{s} \in  [0,1]^g~:~\sum_{i = 1}^g s_i^2 \leq 1\right\}.
\end{equation*}
In fact, the converse inclusion holds for $d \geq 2^{\lceil (g-1)/2\rceil }$, as we will show now. The next proposition, which appears in \cite{marciniak2015unbounded} and is proven here for convenience, shows that these lower bounds are tight. Let $(F_{\pm|x})_{x\in [g]}$ be a family of anti-commuting self-adjoint unitary operators such that $\{F_{+|x},F_{+|x'}\} := F_{+|x} F_{+|x^\prime} + F_{+|x^\prime} F_{+|x} = 2 \delta_{x,x^\prime} I$ and choose $F_{-|x} = - F_{+|x}$ for all $x \in [g]$. Such operators can be constructed recursively as $F_1^{(0)} = [1]$ and 
\begin{equation*}
    F_i^{(k+1)} = \sigma_X \otimes F_i^{(k)} \quad \forall i \in [2k+1], \qquad F_{2k+2}^{(k+1)} =  I_{2^k} \otimes \sigma_Y,\qquad F_{2k+3}^{(k+1)} =  I_{2^k} \otimes \sigma_Z.
\end{equation*}
Here, $\sigma_X$, $\sigma_Y$ and $\sigma_Z$ are the Pauli matrices. The dimension dependence of this construction is essentially optimal \cite{newman1932note, hrubes2016families}. See also \cite{passer2018minimal} and Section 8.2 of \cite{bluhm2018joint}. The construction generalizes a steering inequality for qubits that already appears in \cite{Cavalcanti2009, Pusey2013}.
\begin{prop} \label{prop:d-independent-bounds}
Let  $d \geq 2^{\lceil (g-1)/2\rceil }$, $g \geq 2$. Then, 
\begin{equation*}
    \Sigma(g,d) = \Delta_\square(g,d) = \left\{\mathbf{s} \in  [0,1]^g~:~\sum_{i = 1}^g s_i^2 \leq 1\right\}.
\end{equation*}
Generally, for any $d \in \mathbb N$, $\Sigma(g,d) = \Delta_\square(g,d) \supseteq \left\{\mathbf{s} \in  [0,1]^g~:~\sum_{i = 1}^g s_i^2 \leq 1\right\}$.
\end{prop}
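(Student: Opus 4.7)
The equality $\Sigma(g,d) = \Delta_\square(g,d)$ is Theorem~\ref{thm:Sigma-equals-Delta}, valid in all dimensions. The inclusion $\Delta_\square(g,d) \supseteq \{\mathbf{s} \in [0,1]^g : \sum s_i^2 \leq 1\}$ is quoted directly from \cite[Theorem 6.7]{passer2018minimal}. Thus the second (general-$d$) statement is immediate, and the entire work lies in proving the reverse inclusion
$$\Delta_\square(g,d) \subseteq \left\{\mathbf{s} \in [0,1]^g : \sum_{i=1}^g s_i^2 \leq 1\right\}$$
in the regime $d \geq 2^{\lceil (g-1)/2\rceil}$. My plan is to do this by contraposition: given any $\mathbf{s} \in [0,1]^g$ with $\|\mathbf{s}\|_2 > 1$, I will exhibit a witness $\mathbf{B} \in (\mathcal M_d^{\mathrm{sa}})^g$ together with a tuple $\mathbf{X} \in \mathcal D_{\square,g}(d)$ showing that $\mathcal D_{\square,g}(1) \subseteq \mathcal D_{\mathbf B}(1)$ while $\mathbf{s}.\mathbf{X} \notin \mathcal D_{\mathbf B}(d)$.

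The construction uses the anti-commuting self-adjoint unitaries $F_{+|1}, \ldots, F_{+|g} \in \mathcal M_d$ described just before the proposition (which exist precisely because $d \geq 2^{\lceil (g-1)/2\rceil}$). Define
$$B_i := \frac{s_i}{\|\mathbf{s}\|_2}F_{+|i}, \qquad X_i := F_{+|i}^\top \qquad (i \in [g]).$$
For the level-$1$ inclusion I will use the Clifford identity $(\sum_i c_i F_{+|i})^2 = (\sum_i c_i^2)\,I$, which gives $\|\sum_i \epsilon_i B_i\|_\infty = \|\mathbf{s}\|_2/\|\mathbf{s}\|_2 = 1$ for every $\boldsymbol{\epsilon} \in \{\pm 1\}^g$, hence $\sum_i \epsilon_i B_i \leq I$ and $\mathcal D_{\square,g}(1) \subseteq \mathcal D_{\mathbf B}(1)$. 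The tuple $\mathbf{X}$ lies in $\mathcal D_{\square,g}(d)$ since $\|F_{+|i}^\top\|_\infty = \|F_{+|i}\|_\infty = 1$.

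To exhibit failure of inclusion at level $d$, I will evaluate the combined operator on the maximally entangled state $|\Omega\rangle = d^{-1/2}\sum_k |k,k\rangle \in \mathbb C^d \otimes \mathbb C^d$, using the identity $\langle \Omega | A \otimes B | \Omega\rangle = d^{-1}\mathrm{Tr}(AB^\top)$. With the particular pairing $A = B_i$ and $B = s_i X_i = s_i F_{+|i}^\top$, this yields
$$\langle \Omega | B_i \otimes s_i X_i |\Omega\rangle = \frac{s_i^2}{\|\mathbf{s}\|_2 d}\,\mathrm{Tr}(F_{+|i}^2) = \frac{s_i^2}{\|\mathbf{s}\|_2},$$
so summing over $i$ gives $\langle \Omega | \sum_i B_i \otimes s_i X_i | \Omega\rangle = \|\mathbf{s}\|_2 > 1$, which forces $\sum_i B_i \otimes s_i X_i \not\leq I \otimes I$, i.e.\ $\mathbf{s}.\mathbf{X} \notin \mathcal D_{\mathbf B}(d)$.

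The only nontrivial step is picking the right pair of tuples, and in particular taking the \emph{transpose} in the definition of $X_i$: without it, the identity $(F_{+|i} \otimes F_{+|i})|\Omega\rangle = F_{+|i}F_{+|i}^\top \otimes I\,|\Omega\rangle$ picks up basis-dependent signs (e.g.\ $\sigma_Y^\top = -\sigma_Y$), and the $g$ contributions no longer add constructively on $|\Omega\rangle$. With the transposed choice, $F_{+|i}(F_{+|i}^\top)^\top = F_{+|i}^2 = I$ regardless of $i$, which is what makes every term contribute its full weight $s_i^2/\|\mathbf{s}\|_2$. This is the main subtlety; once this observation is made, verifying the level-$1$ inclusion and extracting the witness calculation is routine.
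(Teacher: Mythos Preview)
Your proof is correct and uses the same witness as the paper (the anti-commuting self-adjoint unitaries $F_{+|i}$); the only difference is the language. The paper argues on the $\Sigma_0$ side: it shows $V_{\mathcal L}(\mathbf{s}.\mathbf F)\le\|\mathbf s\|_2$ via the Clifford identity, then lower-bounds $V_{\mathcal Q}(\mathbf{s}.\mathbf F)\ge\|\mathbf s\|_1$ by plugging in the assemblage $\sigma_{\pm|x}=(I\pm F_{+|x})/(2d)$, and concludes $\|\mathbf s\|_2\le1$ for $\mathbf s\in\Sigma_0(g,d)$. You argue on the $\Delta_\square$ side directly with $B_i=(s_i/\|\mathbf s\|_2)F_{+|i}$ and the maximally entangled state. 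These are literally the same computation under Theorems~\ref{thm:steering-equals-inclusion} and~\ref{thm:Sigma-equals-Delta}: your choice $X_i=F_{+|i}^\top$ on $|\Omega\rangle$ is exactly what produces the paper's assemblage via $\sigma_{a|x}=[\operatorname{Tr}\otimes\operatorname{id}]((E_{a|x}\otimes I)|\Omega\rangle\langle\Omega|)=E_{a|x}^\top/d$. In particular, the ``transpose subtlety'' you flag disappears in the paper's formulation, since one writes down the assemblage directly without ever naming Alice's POVMs or the shared state.
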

\begin{proof}
We only need to prove that $\Delta_\square(g,d) \subseteq \left\{\mathbf{s} \in  [0,1]^g~:~\sum_{i = 1}^g s_i^2 \leq 1\right\}$, since the reverse inclusion follows from \cite[Theorem 6.7]{passer2018minimal} and Theorem \ref{thm:Sigma-equals-Delta}. Let $(F_{\pm|x})_{x \in [g]}$ be as above and let $\mathbf{s} \in  [0,1]^g$. Then, using Lemma \ref{lem:V-lambda_max},
\begin{align*}
   V_{\mathcal L}(\mathbf{s}.\mathbf{F}) &= \sup_{\mathbf{p}} \lambda_{\max}\left[ \sum_{x \in [g]} s_x F_{+|x}(p(+|x,\lambda) - p(-|x,\lambda))\right].
\end{align*}
Using $p(-|x,\lambda)  = 1 - p(+|x,\lambda)$ and considering extremal conditional probabilities, we obtain
\begin{align*}
     V_{\mathcal L}(\mathbf{s}.\mathbf{F}) &\leq \sup_{\boldsymbol{\epsilon} \in \{\pm 1\}^g}\left\| \sum_{x \in [g]} s_x \epsilon_x F_{+|x} \right\|_\infty \\
     &=  \sup_{\boldsymbol{\epsilon} \in \{\pm 1\}^g}\left\| \left(\sum_{x \in [g]} s_x \epsilon_x F_{+|x}\right)^2 \right\|^{\frac{1}{2}}_\infty \\
     & = \left(\sum_{x \in [g]} s_x^2\right)^{\frac{1}{2}} = \|\mathbf{s}\|_2,
\end{align*}
since by the choice of $\mathbf{F}$
\begin{equation*}
    \left(\sum_{x \in [g]} s_x \epsilon_x F_{+|x}\right)^2 = \left(\sum_{x \in [g]}s_x^2\right) I.
\end{equation*}
Now, let $\sigma_{\pm|x} = \frac{1}{2d}(I \pm F_{+|x})$. It is easy to verify that this is an assemblage. Then,
\begin{align*}
    V_{\mathcal Q}(\mathbf{s}.\mathbf{F}) &\geq \frac{1}{d} \sum_{x \in [g]} s_x \mathrm{Tr}[F_{+|x}^2] \\
    &= \sum_{x \in [g]} s_x = \|\mathbf{s}\|_1,
\end{align*}
using that the $F_{+|x}$ square to the identity. Therefore, for any $\mathbf{s} \in  \Sigma(g,d)  = \Delta_\square(g,d)$, we have $V_{\mathcal L}(\mathbf{s}/\|\mathbf{s}\|_2.\mathbf{F}) \leq 1$ and thus, since $\mathbf{F}$ is unbiased, $V_{\mathcal Q}(\mathbf{s}.\mathbf{s}\|\mathbf{s}\|_2.\mathbf{F}) \leq 1$. This implies that $\big\|\mathbf{s}.\mathbf{s}/\|\mathbf{s}\|_2\big\|_1 = \|\mathbf{s}\|_2 \leq 1$, and the conclusion follows.
\end{proof}
This is a simplified version of the proof of Theorem 2 in \cite{marciniak2015unbounded}, although a small mistake led the authors to obtain $V(\mathbf{F}) \geq \sqrt{g/2}$ instead of the correct value $V(\mathbf{F}) \geq \sqrt{g}$. Another way to obtain the $d$-independent bounds of this section would have been to use the results gathered in \cite{davidson2016dilations, bluhm2018joint} and to use Proposition \ref{prop:Deltas-are-equal}.

\section{Discussion} \label{sec:discussion}

\subsection{Quantum steering}
In this section, we compare our work to previous results on the violation of steering inequalities. While previous results such as \cite{marciniak2015unbounded, Yin2015} give lower bounds on the violation of steering inequalities, our connection to inclusion constants gives upper bounds on $V(\mathbf{F})$ (see Proposition \ref{prop:sigma-is-delta}). Moreover, the main result of \cite{marciniak2015unbounded, Yin2015} is that $V(\mathbf{F})$ is unbounded for increasing dimension $d$ and number of measurements on Alice's side $g$, whereas we are interested in the possible violations for \emph{fixed} $d$ and $g$. This explains the apparent contradiction between the unbounded violations obtained in \cite{marciniak2015unbounded} by letting $g$, $d \to \infty$ and the explicit upper bounds obtained in this work for fixed $g$, $d$.

Our connection to free spectrahedra, established in Theorem \ref{thm:steering-equals-inclusion}, allows us to study in particular the regimes $d \gg g$ and $d \ll g$. In terms of $g$-independent bounds, we find that $\gamma^0_{g,d} \leq 1/\tau_*(d)$, where 
	\begin{equation} \label{eq:discussion-sigma}
		\tau_*(d) = 4^{-n}\binom{2n}{n}, \qquad \mathrm{with~} n = \lfloor d/2 \rfloor
	\end{equation}
(see Theorem \ref{thm:g-independent-bounds}). Asymptotically, $\tau_*(d)$ behaves as $\sqrt{2/(\pi d)}$. Such bounds are not considered in previous works such as \cite{marciniak2015unbounded, Yin2015}, since the authors of those papers are only interested in the dependence on the number of measurements on Alice's side $g$. The proof mainly consists of adapting the techniques in \cite{ben-tal2002tractable, helton2019dilations} to the complex setting, which is not straightforward. Moreover, Proposition \ref{prop:optimality-of-constant} shows how to construct unbiased linear steering inequalities which are arbitrarily close to achieving the bound $1/\tau_*(d)$; thereby showing that the bound is optimal. Note that the dimension dependence found in Theorem 2 of \cite{marciniak2015unbounded} is quite weak compared to our bound, since it only gives $V(\mathbf{F}) \geq O(\sqrt{\log{d}})$, which is exponentially weaker.

In terms of bounds independent of $d$, we find in Proposition \ref{prop:d-independent-bounds} that 
\begin{equation}\label{eq:quarter-circle}
    \Sigma(g,d) \supseteq \mathrm{QC}_g:=\left\{\mathbf{s} \in  [0,1]^g~:~\sum_{i = 1}^g s_i^2 \leq 1\right\}
\end{equation}
for any $d \in \mathbb N$, with equality for $d \geq 2^{\lceil (g-1)/2 \rceil}$. In particular, this means that $\gamma^0_{g,d} \leq \sqrt{g}$ with equality for  $d \geq 2^{\lceil (g-1)/2 \rceil}$. The result follows essentially from the results in \cite{passer2018minimal}. It shows that the $g$-dependence of $V(\mathbf{F})$ found in Theorem 2 of \cite{marciniak2015unbounded} is optimal and gives an explicit construction of an optimal steering inequality, based on the Pauli matrices. We note that for $g \gg d$, $	\tau_*(d) \not \in \mathrm{QC}_g$, such that we can identify a regime in which $\Sigma(g,d) \supsetneq \mathrm{QC}_g$. Figure \ref{fig:phase-diagram} collects the results obtained in this work concerning the set of steering constants $\Sigma(g,d)$.

\begin{figure}
    \centering
    \includegraphics{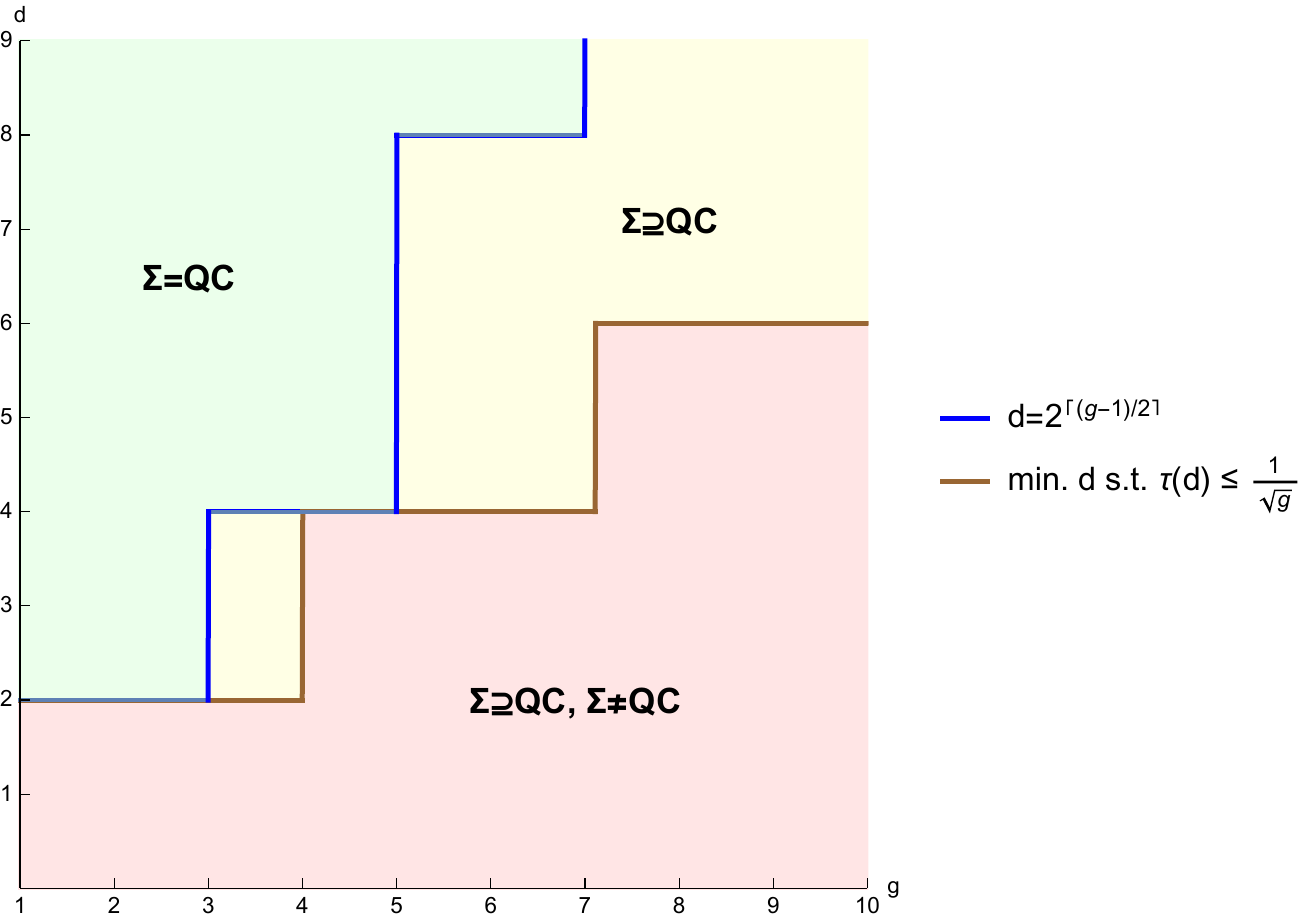}
    \caption{Bounds on the set $\Sigma(g,d)$. The set $\mathrm{QC}_g$ is defined as in \eqref{eq:quarter-circle}. Note that the blue line belongs to the green region in which $\Sigma(g,d) = \mathrm{QC}_g$, whereas the brown curve does not belong to the red region in which $\Sigma(g,d) \supsetneq \mathrm{QC}_g$}.
    \label{fig:phase-diagram}
\end{figure}

While we have focused on assemblages with $k_x=2$ for all $x \in [g]$, the connection to free spectrahedra also holds for arbitrary steering settings $\mathbf k$. The dichotomic case has however the advantage that the free spectrahedra of interest are well-studied.

\subsection{Joint measurability} 
We conclude this section with a discussion of the implications of the present work for the compatibility of dichotomoic measurements. Recall the definition of the compatibility regions studied in \cite{busch2013comparing, bluhm2018joint, bluhm2020compatibility, Bluhm2020GPT} in the dichotomic case:
\begin{align*}
    &\Gamma(g, d) :=\\& \left\{\mathbf{s} \in  [0,1]^g~:~ \mathrm{POVMs~}(E_{a|x}^\prime)_{a \in [2]}\mathrm{~are~compatible~}\forall \mathrm{~POVMs~}(E_{a|x})_{a \in [2]} \subset (\mathcal M_d^+)^{2}\right\},
\end{align*}
where 
\begin{equation*}
    E_{a|x}^\prime := s_x E_{a|x} + (1-s_x) \frac{I_d}{2}.
\end{equation*}
As proven in \cite{bluhm2018joint} (see also \eqref{eq:Gamma-is-jewel}), it holds that 
\begin{equation*}
     \Gamma(g, d) = \Delta_{\diamond}(g,d).
\end{equation*}
Since by Proposition \ref{prop:Deltas-are-equal} $\Delta_{\diamond}(g,d) = \Delta_{\square}(g,d)$, the inclusion constants for the matrix cube derived in Section \ref{sec:cube-csts} lead to new bounds on the amount of incompatibility available for a fixed dimension and fixed number of measurements as represented by $\Gamma(g, d)$. Building on \cite{helton2019dilations}, Proposition 7.2 in \cite{bluhm2018joint} shows that $(2d)^{-1}(1, \ldots, 1) \in \Gamma(g, d)$. Theorem \ref{thm:g-independent-bounds} improves upon this and shows that  $\tau_*(d)(1, \ldots, 1) \in \Gamma(g, d)$, where again $\tau_*(d)$ as in \eqref{eq:discussion-sigma}. As asymptotically $\tau_*(d)$ behaves as $\sqrt{2/(\pi d)}$, this improves over the previous bound, since the behavior is now $\Theta(d^{-1/2})$ instead of $\Theta(d^{-1})$. Thus, we have derived better $g$-independent bounds than in \cite{bluhm2018joint}, whereas the $d$-independent bounds we find are the same. Finally, in the case of qubits, i.e.\ $d=2$, we find that $\tau_*(2) = 1/2$. This recovers a result from \cite{Bluhm2020GPT}, which has been obtained using completely different techniques, exploiting a connection to $1$-summing norms of $\ell_2$ Banach spaces.

\textbf{Acknowledgements:}
The authors would like to thank Franck Barthe for providing us with the proof of Lemma \ref{lem:minimum-signed-chi2} in the even case, as well as MathOverflow users ``Fedor Petrov'' and ``Steve'' for providing simpler, more conceptual proofs for parts of Lemma \ref{lem:minimum-signed-chi2}. A.B.~acknowledges financial support from the VILLUM FONDEN via the QMATH Centre of Excellence (Grant No.10059) and the QuantERA ERA-NET Cofund in Quantum Technologies implemented within the European Union’s Horizon 2020 Programme (QuantAlgo project) via the Innovation Fund Denmark. I.N.~was supported by the ANR project ``ESQuisses'' (grant number ANR-20-CE47-0014-01).

\bibliographystyle{plainnat}
\bibliography{spectralit}
\end{document}